\crefname{cclaim}{Claim}{Claims}
\spnewtheorem{cclaim}[claims]{Claim}{\itshape}{\rmfamily}
\spnewtheorem*{proofsketch}{Proof (sketch)}{\itshape}{\rmfamily}
\newcommand{\suchthat}{\;\ifnum\currentgrouptype=16 \middle\fi|\;}
\begin{document}

\title{Arithmetic Circuits with Division} 
\author{Silas Cato Sacher\orcidID{0009-0004-6850-1298}}
\authorrunning{S. C. Sacher}
\institute{Universit\"at Trier, Fachbereich IV, Informatikwissenschaften, Germany \\ 
\email{sacher@uni-trier.de}}

\maketitle

\newcommand{\logReduction}[0]{\leq^{\log}_m}
\newcommand{\logEquiv}[0]{\equiv^{\log}_m}
\newcommand{\pReduction}[0]{\leq^{\mathrm{P}}_m}
\newcommand{\pspaceReduction}[0]{\leq^{\mathrm{PSPACE}}_m}
\newcommand{\natNum}[0]{\mathbb{N}}

\newcommand{\DTIME}[0]{\mathrm{DTIME}}
\newcommand{\NTIME}[0]{\mathrm{NTIME}}
\newcommand{\DSPACE}[0]{\mathrm{DSPACE}}
\newcommand{\NSPACE}[0]{\mathrm{NSPACE}}
\newcommand{\ATIME}[0]{\mathrm{ATIME}}
\newcommand{\E}[0]{\mathrm{E}}
\newcommand{\EXP}[0]{\mathrm{EXP}}
\newcommand{\NP}[0]{\mathrm{NP}}
\newcommand{\NEXP}[0]{\mathrm{NEXP}}
\newcommand{\PSPACE}[0]{\mathrm{PSPACE}}
\newcommand{\PL}[0]{\mathrm{PL}}
\newcommand{\NL}[0]{\mathrm{NL}}
\newcommand{\PIT}[0]{\mathcal{PIT}}
\newcommand{\CEqualsL}[0]{\mathrm{C_{=}L}}
\newcommand{\defEquals}[0]{:=}

\newcommand{\compl}[0]{\overline{\phantom{c}}}
\newcommand{\cO}[0]{\mathcal{O}}
\newcommand{\MC}[0]{\mathrm{MC}}
\newcommand{\primes}[0]{\mathbb{P}}
\newcommand{\natNumInf}[0]{\natNum_{\infty}}
\newcommand{\ExactCover}[0]{\textsc{ExactCover}}

\newcommand{\ifAndOnlyIf}[0]{if and only if }
\newcommand{\logSpace}[0]{logarithmic space }
\newcommand{\polyTime}[0]{polynomial time }
\newcommand{\polySpace}[0]{polynomial space }
\newcommand{\TM}[0]{Turing machine } 
\newcommand{\TMs}[0]{Turing machines } 

\newcommand{\myNote}[1]{\todo[color=gray!20]{#1}}

\begin{abstract}
We study the computational complexity of the membership problem for arithmetic circuits over natural numbers with division. We consider different subsets of the operations $\{\cup,\cap, \compl, +, \times, / \}$, where $/$ is the element-wise integer division (without remainder and without rounding). Results for the subsets without division have been studied before, in particular by McKenzie and Wagner~\cite{MW07} and Yang~\cite{Yang2000}. The division is expressive because it makes it possible to describe the set of factors of a given number as a circuit. Surprisingly, the cases $\{ \cup,\cap,\compl,+, / \}$ and $\{ \cup,\cap,\compl,\times, / \}$ are $\PSPACE$-complete and therefore equivalent to the corresponding cases without division. The case $\{ \cup, / \}$ is $\NP$-hard in contrast to the case $\{ \cup \}$ which is $\NL$-complete. Further upper bounds, lower bounds and completeness results are given.
\end{abstract} 
\section{Introduction} 
Arithmetic circuits are a generalization of formulas over sets of natural numbers (including $0$). In such formulas, the numbers are interpreted as singleton sets that are combined through some operations (initially only $\{\cap, \cup, \compl, + \}$) similar to the way regular languages can be expressed by regular expressions. For example $2 \times \overline{0 \cap 1}$ describes the set of even numbers, because $\overline{0 \cap 1}$ describes the set of natural numbers. In fact, such formulas were introduced by Stockmeyer and Meyer in 1973 in a work that otherwise considered regular expressions~\cite{SM73}. The arithmetic circuits was first considered by Wagner in 1984~\cite{Wa84}. 

An arithmetic circuit $C= (V, E, g_c, \alpha)$ is an acyclic directed graph, with a map $\alpha : V \rightarrow \cO \cup \natNum$ labeling the nodes (called gates) and a fixed output gate $g_c$. To the gates, the map assigns either natural numbers (interpreted as singleton sets) or an operation that describes how the sets from the predecessor nodes are combined into a new set. Arithmetic operations are applied element-wise, for example, $A + B = \{ c \in \natNum \mid \exists a \in A \; \exists b \in B: c = a + b \}$ for sets $A, B \subseteq \natNum$. Hence circuits are essentially formulas in which parts of the formula can be reused. Therefore, circuits can provide a more compact representation in certain cases. Wagner proposed the question of computational complexity of the membership problem for circuits with the operations $\{ \cup, +, \times \}$. The membership problem on the instance $(C,b)$ is the question whether the natural number $b$ is contained in the set described by the circuit $C$. In 2000 Yang proved that the problem described by Wagner is $\PSPACE$-complete~\cite{Yang2000}. 
Based on this, McKenzie and Wagner in 2003 provided a detailed description of the complexity of the membership problem on natural numbers for the various subsets of the operations $\{\cap, \cup, \compl, +, \times\}$~\cite{MW07}. However, some question--including the decidability of the general problem--remained open. They also provided insight into why determining the complexity of the general problem is difficult: It is possible to describe the set of prime numbers $\primes$ with the arithmetic circuit shown in \Cref{fig:primes} ~\cite[Example 2.1]{MW07}. 
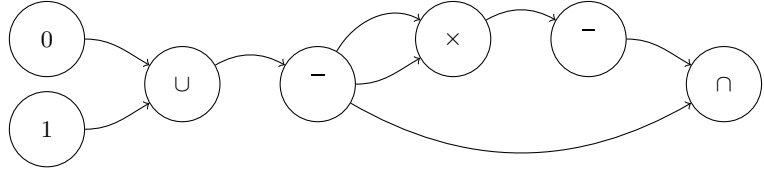
\begin{figure}
\begin{center}
\begin{tikzpicture}
	\newcommand\dx{1.8}
	\newcommand\dy{0.6}
	\newcommand\minSize{1cm}
	
	\node[draw,circle,fill=white,radius=0.5, minimum size=\minSize] (inA) at (0*\dx, -0*\dy) {$0$};
	
	\node[draw,circle,fill=white,radius=0.5, minimum size=\minSize] (inB) at (0*\dx, -2*\dy) {$1$};
	
	\node[draw,circle,fill=white,radius=0.5, minimum size=\minSize] (cup) at (1*\dx, -1*\dy) {$\cup$};
	
	\node[draw,circle,fill=white,radius=0.5, minimum size=\minSize] (negA) at (2*\dx, -1*\dy) {$\overline{\phantom{0}}$};
	
	\node[draw,circle,fill=white,radius=0.5, minimum size=\minSize] (times) at (3*\dx, -0*\dy) {$\times$};
	
	\node[draw,circle,fill=white,radius=0.5, minimum size=\minSize] (negB) at (4*\dx, -0*\dy) {$\overline{\phantom{0}}$};
	
	\node[draw,circle,fill=white,radius=0.5, minimum size=\minSize] (cap) at (5*\dx, -1*\dy) {$\cap$};
	
	\draw[->] (inA)[out=0,in=150] to (cup);
	\draw[->] (inB)[out=0,in=-150] to (cup);
	\draw[->] (cup)[out=30,in=150] to (negA);
	\draw[->] (negA)[out=60,in=150] to (times);
	\draw[->] (negA)[out=0,in=-150] to (times);
	\draw[->] (times)[out=30,in=150] to (negB);
	\draw[->] (negB)[out=0,in=150] to (cap);
	\draw[->] (negA)[out=-30,in=-150] to (cap);
\end{tikzpicture} 
\end{center}
\caption{Arithmetic circuit for the set of prime numbers} \label{fig:primes}
\end{figure}
Consider Goldbach's conjecture. This conjecture states that every even natural number greater than two is the sum of two prime numbers. It was first proposed in a letter exchange between Goldbach and Euler in 1742 and is a part of Hilbert's eighth problem. Glaßer first observed that one can construct a circuit that produces the number $0$ in its output gate, \ifAndOnlyIf Goldbach's conjecture holds~\cite{MW07}. Therefore an algorithm for the general membership problem could determine whether Goldbach's conjecture holds or not. 

There has been extensive research on arithmetic circuits: Arithmetic circuits with different base sets (such as integers~\cite{Tra06}, positive numbers~\cite{Br07}, balanced sets (specific subsets of $\natNum$)~\cite{Dose2020} or even functions~\cite{PHD09}) were studied. Different problems such as the equivalence problem~\cite{GHRTW08} (``Do two given circuits describe the same set?'') and the emptiness problem~\cite{TR17:2} (``Does a given circuit describe the empty set?'') have been studied as well. There have also been generalizations of the membership problems: In 2007 Glaßer et al. considered satisfiability problems for circuits over sets of natural numbers~\cite{Gla07}. Furthermore, arithmetic circuits over semi-rings in non-commutative settings have been considered as well \cite{Mahajan94}. 

In this work, we study the computational complexity of the membership problem for circuits over natural numbers with division. We will consider the different sets of operations $\cO$ such that $\{ / \} \subseteq \cO \subseteq\{\cap, \cup, \compl, +, \times, / \}$, where $/$ is the element-wise division without remainder or rounding, that is,  for sets $A, B \subseteq \natNum$, $ A / B = \{ c \in \natNum \mid \exists a \in A \; \exists b \in B \setminus \{ 0 \}: a = c \cdot b  \}$. 
The author found this operation to be especially expressive compared to division with rounding, because one can easily express sets like the set of factors of a given number $b$ (Consider the formula $b / \overline{0 \cap 1}$.), the prime factors of a given number and the common divisors of two given numbers. We were already able to express the set of all multiples of a given number and the set of numbers that are the product of exactly $k$ prime numbers with multiplication. These properties bring the membership problem of arithmetic circuits with division (but without addition) close to Skolem arithmetic. 
Skolem arithmetic is decidable, but the best known algorithm has a triple exponential running time. 
One can reduce the membership problem for the operations $\{\cap, \cup, \compl, \times, / \}$ to Skolem arithmetic. This can be proven in a way similar to the results of Glaßer et al.~\cite[Thm. 1]{Gl15}. However, the proof will not be given in this work because instead we will prove that the problem is $\PSPACE$-complete. Indeed we will show that the cases $\{ \cup,\cap,\compl,+, / \}$ and $\{ \cup,\cap,\compl,\times, / \}$ are $\PSPACE$-complete and equivalent to the corresponding cases without division. In addition, we prove that the case $\{ \cup,\cap,+, \times, / \}$ is $\NEXP$-complete and equivalent to the case $\{ \cup,\cap,+, \times \}$. The case $\{ \cup, / \}$ is $\NP$-hard in contrast to the case $\{ \cup \}$ which is $\NL$-complete. Therefore, adding division will--under common assumptions--increase the computational complexity. Further upper bounds will be given in \Cref{Kapitel1} and further lower bounds will be given in \Cref{Kapitel2}. 

A detailed overview of the results (including references to the respective theorems) is given in \Cref{Tabelle}. If the lower bound in the line for the set of operations $\cO$ has an external reference, then it is a reference for a hardness result for the membership problem for circuits with operations $\cO'$ such that $\mathcal{O}' \subseteq \cO \setminus \{ / \}$. This can be transferred to the membership problem with the operations $\cO$. 

The results of this work are also contained in the author's master's thesis~\cite{Sacher2023}.
\begin{table}[t]
\centering
\caption{Lower and upper bounds of membership problems with division.} \label{Tabelle}
\begin{tabular}[h]{l|llllll|ll|ll}
& $\cO$ & & & & & & Lower bound & Reference & Upper bound & Reference \\
\hline
1 & $\cup$ & $\cap$ & $\compl$ & $+$ & $\times$ & $/$ & $\NEXP$ & \cite[Th. 6.4]{MW07} & ? & \\
2 &        &        & $\compl$ & $+$ & $\times$ & $/$ & $\PSPACE$ & \cite[Th. 6 (2)]{Tra06} & ?  & \\
3 & $\cup$ & $\cap$ &          & $+$ & $\times$ & $/$ & $\NEXP$ & \cite[Th. 6.2]{MW07} & $\NEXP$ & \Cref{Satz6}\\
4 & $\cup$ &        &          & $+$ & $\times$ & $/$ & $\PSPACE$ & \cite[Th. 2.3]{MW07} & $\NEXP$ & \Cref{Satz6}\\
5 &        & $\cap$ &          & $+$ & $\times$ & $/$ & $\PIT$ & \Cref{Folg3} & $\EXP$ & \Cref{Satz3} \\
6 &        &        &          & $+$ & $\times$ & $/$ & $\PIT$ & \Cref{Folg7} & $\EXP$ & \Cref{Satz3} \\
\hline
 7 & $\cup$ & $\cap$ & $\compl$ & $+$ & & $/$ & $\PSPACE$ & \cite[Th. 5.5]{MW07} & $\PSPACE$ & \Cref{Satz1}\\
 8 &        &        & $\compl$ & $+$ & & $/$ & $\PSPACE$ & \cite[Th. 6 (2)]{Tra06} & $\PSPACE$ & \Cref{Satz1}\\
 9 & $\cup$ & $\cap$ &          & $+$ & & $/$ & $\PSPACE$ & \cite[Th. 5.5]{MW07} & $\PSPACE$ & \Cref{Satz1}\\
10 & $\cup$ &        &          & $+$ & & $/$ & $\NP$ & \cite[Th. 4.4]{MW07} & $\PSPACE$ & \Cref{Satz1}\\
11 &        & $\cap$ &          & $+$ & & $/$ & $\CEqualsL$ & \cite[Th. 8.2]{MW07} & $\mathrm{P}$ & \Cref{Satz3}\\
12 &        &        &          & $+$ & & $/$ & $\CEqualsL$& \cite[Th. 8.2]{MW07} & $\mathrm{P}$ & \Cref{Satz3}\\
\hline
13 & $\cup$ & $\cap$ & $\compl$ & & $\times$ & $/$ & $\PSPACE$ & \cite[Th. 5.5]{MW07} & $\PSPACE$ & \Cref{Folg5}\\
14 &        &        & $\compl$ & & $\times$ & $/$ & $\PSPACE$ & \cite[Lem. 26]{TR17:2} & $\PSPACE$ & \Cref{Folg5}\\
15 & $\cup$ & $\cap$ &          & & $\times$ & $/$ & $\PSPACE$ & \cite[Th. 5.5]{MW07} & $\PSPACE$ & \Cref{Folg5}\\
16 & $\cup$ &        &          & & $\times$ & $/$ & $\NP$ & \cite[Th. 4.4]{MW07} & $\PSPACE$ & \Cref{Folg5}\\
17 &        & $\cap$ &          & & $\times$ & $/$ & $\PL$ & \Cref{Satz4} & $\mathrm{P}$ & \Cref{Folg4}\\
18 &        &        &          & & $\times$ & $/$ & $\PL$ & \Cref{Satz4} & $\mathrm{P}$ & \Cref{Folg4}\\
\hline
19 & $\cup$ & $\cap$ & $\compl$ & & & $/$ & $\NP$ & \Cref{Satz5} & $\PSPACE$ & \Cref{Satz1}\\
20 &        &        & $\compl$ & & & $/$ & $\mathrm{P}$ & \Cref{Satz8} & $\PSPACE$ & \Cref{Satz1}\\
21 & $\cup$ & $\cap$ &          & & & $/$ & $\NP$ & \Cref{Satz5} & $\PSPACE$ & \Cref{Satz1}\\
22 & $\cup$ &        &          & & & $/$ & $\NP$ & \Cref{Satz5} & $\PSPACE$ & \Cref{Satz1}\\
23 &        & $\cap$ &          & & & $/$ & $\NL$ & \cite[Th. 9.1]{MW07} & $\mathrm{P}$ & \Cref{Satz3}\\
24 &        &        &          & & & $/$ & $\NL$ & \Cref{Satz2} & $\mathrm{P}$ & \Cref{Satz3}\\
\end{tabular}
\label{results}
\end{table}
\section{Preliminaries} \label{Grundlagen}
We consider $0$ to be a natural number and define the function $\log: \natNum \rightarrow \natNum$ as $\log(n) = \lfloor \log_2(n) \rfloor$ for all $n > 0$ and $\log( 0) = 0$. We write the set of prime numbers as $\primes$. For $k \in \natNum$, $[k] \defEquals \{ 1, \dots, k \}$. 
We assume that the reader is familiar with several complexity classes, especially the deterministic time complexity classes $\mathrm{P}$, $\E = \DTIME( 2^{O(n)})$ and $\EXP = \bigcup_{k \geq 1}\DTIME( 2^{O(n^k)})$, the nondeterministic time complexity classes $\NP$ and $\NEXP = \bigcup_{k \geq 1} \NTIME(2^{O(n^k)})$, the space complexity classes $\mathrm{L}$, $\NL$, $\PSPACE$ and the probabilistic complexity class $\PL$. 
$\ATIME(t)$ is the set of all languages accepted by an alternating Turing machine--a model for parallel computers--in time $t: \natNum \rightarrow \natNum$. An introduction to alternating \TMs in the context of circuits can be found in \cite[Section 2.5]{Vollmer99}. 
$\mathrm{FL}$ is the class of all functions $f: \Sigma_1^* \rightarrow \Sigma_2^*$ for alphabets $\Sigma_1$ and $\Sigma_2$ calculated in space $O(\log(n))$ by a deterministic (multi-tape) \TM with output. We write $A \logReduction B$, \ifAndOnlyIf the language $A$ is many-one-reducible to a language $B$ in \logSpace and $A \logEquiv B$, \ifAndOnlyIf $A \logReduction B$ and $B \logReduction A$. It is known that $\mathrm{FL}$ is stable under composition and that $\logReduction$ is reflexive and transitive. In addition to $\logReduction$, we sometimes consider $\pReduction$ for \polyTime and $\pspaceReduction$ for \polySpace reduction. We investigate lower and upper bounds for membership problems. By a lower bound $\mathcal{C}_1$ we mean that a problem $A$ is $\leq^{log}_m$-hard for the complexity class $\mathcal{C}_1$ and by an upper bound $\mathcal{C}_2$ we mean that $A \in \mathcal{C}_2$. In this work, we say that a problem $A$ is $\mathcal{C}$-hard (or $\mathcal{C}$-complete) if $ A$ is $\logReduction$-hard (or $\logReduction$-complete) for $\mathcal{C}$. 

\subsection{Arithmetic Circuits}
\begin{definition} 
Let $\emptyset \neq \cO \subseteq \{ \cup, \cap , \compl, +, \times, / \}$ be a set of operations on sets of natural numbers. The operations $\cup$, $\cap$ and $\compl$ are the union, intersection and complement for sets of natural numbers. We write the multiplication as $\times$. The operations $\sigma \in \{ +, \times, / \}$ are extended for sets $A, B \subseteq \natNum$:
$ A \sigma B \defEquals \{ c \in \mathbb {N} \; | \; \exists a \in A, b \in B: \; c = a \sigma b \}$. 
An \emph{(arithmetic) $\cO$-circuit} $C= (V, E, g_C, \alpha)$ is an acyclic, directed multigraph $G = (V,E)$, whose nodes have indegree 0, 1 or 2. Let $V \subseteq \natNum$. The nodes of $G$ are called \emph{gates}. Gates with indegree 0 are called \emph{input gates} and the specified gate $g_C \in V$ is called \emph{output gate}. The gates are labeled with the function $\alpha : V \rightarrow \cO \cup \natNum$, such that for each input gate $g$ we have $\alpha(g) \in \natNum$, for each gate $g$ with indegree $1$ we have $\alpha(g) = \compl$ (only if $\compl \in \cO$) and for all other gates we have $\alpha(g) \in  \cO \setminus \{ \compl \}$. A gate $g$ is also called an \emph{$\alpha(g)$-gate}. For each gate $g \in V$ we define the \emph{result set} $I(g) \subseteq \natNum$. For each input gate $g$, $I(g) \defEquals \{ \alpha(g) \}$. For a gate $g$ with label $\alpha(g) = \compl$ and predecessor $p$, $I(g) \defEquals \mathbb {N} \setminus I(p)$. We define the predecessor function $p(g) \defEquals p$ for $\compl$-gates. For a gate $g$ with label $\alpha(g) = \sigma \in \{ \cup, \cap , +, \times, / \}$ and predecessors $g_1$ and $g_2$ with $g_1 \leq g_2$, $I(g) \defEquals I(g_1) \sigma I(g_2)$. We call $g_1$ the first and $g_2$ the second predecessor of $g$ and define the predecessor functions $p_1(g) \defEquals g_1$ and $p_2(g) \defEquals g_2$. $I(C) \defEquals I(g_c)$ is the \emph{result set of $C$}. 
\end{definition} 

\begin{definition} 
For $\emptyset \neq \cO \subseteq \{ \cup, \cap , \compl, +, \times, / \}$, the \emph{membership problem for $\cO$-circuits} is 
$ \MC(\mathcal {O}) \defEquals \{ (C,b) \; | \; C \text{ is an } \mathcal {O} \text{-circuit and } b \in \mathbb {N} \text{ such that } b \in I(C) \} $.
We use the shorthand notation $\MC(\sigma_1, \dots, \sigma_n)$ for $\MC(\{ \sigma_1, \dots, \sigma_n \})$. 
\end{definition}

\begin{remark}
For $\emptyset \neq \cO_1 \subseteq \cO_2 \subseteq \{ \cup, \cap , \compl, +, \times, / \}$, $\MC(\cO_1) \logReduction \MC(\cO_2)$.
\end{remark}

\subsubsection{Encoding of circuits} In order to determine the complexity of the membership problem, we must understand circuits as inputs of Turing machines. To do this, we encode a circuit $C= (V, E, g_c, \alpha)$ as follows: The circuit is represented as a list of its gates in reverse topological order. Each gate $g \in V$ has the representation $( g, \alpha(g), l(g))$. Where $l(g)$ is the list of predecessors of $g$. If $g$ has an ingoing double edge, the corresponding predecessor is enumerated twice. 
We encode this list in bits. (Hence, the integer labels of the input gates are encoded in binary.) We write the length of the encoding as $|C|$. The value $|C|$ depends on both the size of the underlying multigraph and the labels of the input gates. Topological ordering of the edges is possible because the graph is acyclic. It is possible to deterministically check in \logSpace whether the order is topological. This makes it possible to deterministically check in \logSpace whether the input is a valid encoding of a circuit.
\section{Upper Bounds} \label{Kapitel1}
We do not solve the general problem $\MC( \cup, \cap , \compl, +, \times, / )$. The open problem $\MC( \cup, \cap , \compl, +, \times)$ is reducible to it using logarithmic space. Instead we look at problems without addition, without complement and without multiplication, respectively. 

\subsection{Circuits without multiplication}
We will solve the membership problem for circuits without multiplication on alternating Turing machines. First, we need a technical lemma: Glaßer et al. state that for $\{ \cup, \cap , \compl, + \}$-circuits there is a limit beyond which either all values are described by the circuit or none \cite[Prop. 1]{GHRTW08}. This is preserved if we add $/$-gates. We prove the following by induction:
\begin{makethm}{lemma}{Lemma1}
Let $C$ be a $\cO$-circuit with $\emptyset \not = \cO \subseteq \{ \cup, \cap , \compl, +, / \}$. A number $n \leq 2^{|C|} + 1$ exists such that for all $z \geq n$ it holds that $ z \in I(C)$ \ifAndOnlyIf $ n \in I(C)$.
\end{makethm}

\begin{proof} 
Let $C$ be a $\cO$-circuit. Let $C_g$ be the circuit formed from $C$ using $g$ as the output gate and removing all gates from which $g$ cannot be reached. We show for each gate $g$ from $C$ that a natural number $n_g \leq 2^{|C_g|} +1$ exists such that for each $z \geq n_g$, $ z \in I(g) $ \ifAndOnlyIf $ n_g \in I(g) $. We show this statement by induction. \\
\underline{Base case:} Consider an input gate $g$. The gate $g$ is the only gate in $C_g$ and $I(g) = \{ \alpha(g) \}$. Define $n_g = 2^{|C_g|} +1$. For each $z \geq n_g$, $z \geq 2^{\log(\alpha(g))+1} +1 > \alpha(g)$ and $z \notin I(g)$. \\
\underline{Induction step:} Let $g$ be a gate with predecessors. We assume that the statement holds for the predecessors of $g$ (induction hypothesis). We only consider the case where $g$ is a $/$-gate, because Glaßer et al. already proved the statement of the lemma for $\{ \cup, \cap , \compl, + \}$-circuits \cite[Prop. 1]{GHRTW08}. Let $n_g \defEquals n_{p_1(g)}$. It holds that
$n_g = n_{p_1(g)} \leq 2^{|C_{p_1(g)}|} +1 \leq 2^{|C_g|} +1 $. First, consider the case $n_{p_1(g)} \notin I(p_1(g))$. Hence $I(p_1(g)) \subseteq [0, n_{p_1(g)})$ and $I(g) = I(p_1(g)) / I(p_2(g)) \subseteq [0, n_g) = [0, n_{p_1(g)}) $. Now consider the case $n_{p_1(g)} \in I(p_1(g))$. If $I(p_2(g)) \subseteq \{ 0 \}$, $I(g)$ is the empty set and the statement holds. Now assume that $I(p_2(g)) \not\subseteq \{ 0 \}$. Hence, a number $a \in I(p_2(g)) \setminus \{ 0 \}$ exists. For each $z \geq n_g$, $za \geq n_g = n_{p_1(g)}$. According to the induction hypotheses $za \in I(p_1(g))$ and $z = za/a \in I(p_1(g)) / I(p_2(g)) = I(g)$. \qed
\end{proof}

\begin{corollary} \label{Folg1}
Let $C$ be a $\cO$-circuit with $ \emptyset \neq \cO \subseteq \{ \cup, \cap , \compl, +, / \}$. Then for all $z \geq 2^{|C|} + 1$,
$ z \in I(C) \; \Leftrightarrow \; 2^{|C|} + 1 \in I(C)$.
\end{corollary}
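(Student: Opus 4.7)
The plan is to derive this corollary directly from \Cref{Lemma1} by a short chain of biconditionals; there is no genuine obstacle, only the bookkeeping of replacing the existentially quantified threshold $n$ from the lemma with the explicit value $N := 2^{|C|} + 1$.

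First I would apply \Cref{Lemma1} to the circuit $C$ to obtain a natural number $n \leq 2^{|C|} + 1 = N$ such that for every $z \geq n$, $z \in I(C)$ if and only if $n \in I(C)$. Since $N \geq n$, the lemma applied to $z = N$ yields $N \in I(C) \iff n \in I(C)$.

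Next, for any $z \geq N$ we also have $z \geq n$, so a second application of the lemma gives $z \in I(C) \iff n \in I(C)$. Chaining this with the equivalence $n \in I(C) \iff N \in I(C)$ from the previous step yields $z \in I(C) \iff N \in I(C)$, which is exactly the statement of the corollary.

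The argument is essentially a relabelling of the threshold: \Cref{Lemma1} guarantees that the set $I(C)$ is eventually constant (either cofinite or with only finitely many elements below $n$) from some index $n$ on; once a circuit has this property from index $n$, it automatically has it from any larger index, in particular from $N = 2^{|C|}+1$. Hence no induction or case analysis on operations is needed here — the work has already been done inside the proof of \Cref{Lemma1}.
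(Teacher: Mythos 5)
Your proposal is correct and matches the intended derivation: the paper states \Cref{Folg1} as an immediate consequence of \Cref{Lemma1}, and your two applications of the lemma (once at $z = 2^{|C|}+1 \geq n$ and once at an arbitrary $z \geq 2^{|C|}+1 \geq n$, chained through $n \in I(C)$) are exactly the bookkeeping that justifies it.
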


\begin{makethm}{theorem}{Satz1}
$\MC(\{ \cup, \cap , \compl, +, / \}) \in \PSPACE$.
\end{makethm}

\begin{proof} According to De Morgan's law, $MC( \cup, \cap, \compl , +, /) \logEquiv MC( \cup, \compl , +, /)$. Thus, it is sufficient to solve the membership problem for $\{ \cup, \compl, +, / \}$-circuits.
The membership problem for $\{ \cup, \compl, +, / \}$-circuits can be solved on an alternating \TM using a generalization of an algorithm by McKenzie and Wagner \cite[Lem. 5.4]{MW07}. We will answer the question of whether a natural number $x$ is contained in $I(g)$ for a gate $g$ by guessing from which numbers $x_1$ and $x_2$ the number $x$ was calculated and checking $x_1 \in I(p_1(g))$ and $x_2 \in I(p_2(g))$ in parallel using a universal state. Because of the $\compl$-gates we might need to check $x \notin I(g)$ instead. We do this by checking for all pairs of numbers $(x_1,x_2)$ that the number $x$ could have been calculated from $x_1 \notin I(p_1(g))$ or $x_2 \notin I(p_2(g))$. In each case, we use \Cref{Folg1} to limit the size of the values $x$ to $O(|C|)$ bits: If a value $x$ is too large, we can check $2^{|C|} + 1 \in I(g)$ instead. We get a \polyTime algorithm for alternating Turing machines.
This can be simulated in \polySpace on a ``classical'' \TM and we get $\MC(\cO) \in \PSPACE$.

\begin{figure}
\algrenewcommand\algorithmicrequire{\textbf{Input:}}
\algrenewcommand{\algorithmiccomment}[1]{\State {\color{blue}// #1 }}
\begin{algorithmic}[1]
   \Require $(C,b)$
   \State Let $m = 2^{|C|} + 1$, $g = g_C$, $x = b$ and $t = 1$.
   \While{$\alpha(g) \notin \mathbb{N}$}
      \If{$x > m $}
   		\State Let $x = m$. \label{L_m}
      \EndIf
      \Comment{If $t==1$ test $x \in I(g)$ and if $t==0$ test $x \notin I(g)$.}
      \If{$\alpha(g) == \compl $} 
        \State Let $g = p(g)$ and $t = 1-t$. \label{L_Komplement}
      
      \ElsIf{$\alpha(g) == \cup $ and $t == 0$}
      	\State Set $g = p_1(g)$ and $g = p_2(g)$ in parallel using an universal state. \label{L_Vereinigung0}       
 
      \ElsIf{$\alpha(g) == \cup $ and $t == 1$}
      	\State Guess $i \in \{ 1,2 \}$. \label{L_Vereinigung1}   
        \State Let $g = p_i(g)$.

      \ElsIf{$\alpha(g) == + $ and $t == 1$} 
        \State Bitwise guess a binary encoded $a \in [0, x] \cap \mathbb{N}$. \label{L_Addition1}
        \State Set  $(x,g) = (x-a, p_1(g))$ and $(x,g) = (a, p_2(g))$ in parallel using an universal state.
      \ElsIf{$\alpha(g) == + $ and $t == 0$}
        \State Set $a$ to every value from $[0, x] \cap \mathbb{N}$ in parallel using universal states. 
        \State Guess $i \in \{ 0, 1 \}$.
        \If{$i == 0$}
          \State Let $(x,g) = (x-a, p_1(g))$.
        \Else
          \State Let $(x,g) = (a, p_2(g))$.
        \EndIf  
      
      \ElsIf{$\alpha(g) == / $ and $t == 1$}
      	\State Bitwise guess a binary encoded $a \in [1, m] \cap \mathbb{N}$. \label{L_Division1} 
        \State Set $(x,g) = (a \cdot x, p_1(g))$ and  $(x,g) = (a, p_2(g))$ in parallel using an universal state.
      \ElsIf{$\alpha(g) == / $ and $t == 0$}
        \State Set $a$ to every value from  $[1, m] \cap \mathbb{N}$ in parallel using universal states. 
        \State Guess $i \in \{ 1,2 \}$.
        \If{$i == 0$}
          \State Let $(x,g) = (a \cdot x, p_1(g))$      		
        \Else
          \State Let $(x,g) = (a, p_2(g))$.
        \EndIf
      \EndIf
   \EndWhile
   \Comment{$g$ is an input gate.} \label{L_Blatt}
   \If{$t == 1$}
      \State Accept if $x == \alpha(g)$. Otherwise, reject.
   \Else
      \State Reject if $x == \alpha(g)$. Otherwise accept.
   \EndIf
\end{algorithmic}
\caption{Algorithm for circuits with union, complement, addition and division}
\label{Algo1}
\end{figure}

\Cref{Algo1} shows the algorithm that solves the problem $MC( \cup, \compl , +, /)$ for valid inputs $(C,b)$ on an alternating Turing machine. We first want to show the correctness of the algorithm. Let $C$ be a $\{ \cup, \compl , +, / \}$-circuit and $b \in \natNum$. Let $\beta_M(C,b)$ be the computation tree of the algorithm for the input $(C,b)$. We want to show that $\beta_M(C,b)$ has an accepting subtree \ifAndOnlyIf $b \in I(C)$. For a gate $g'$ of $C$, $x' \in \natNum$ and $t' \in \{ 0, 1 \}$, let $\beta(g',x',t' )$ be a subtree of $\beta_M(C,b)$ from a point in time at which the variables $g$, $x$ and $t$ have taken on the values $g'$, $x'$ and $t'$ . $\beta(g',x',t')$ can occur multiple times in $\beta_M(C,b)$. $\beta(g',x',t')$ is still unique because all occurrences are identical.
Also define \\
$$A(g',x',t') \defEquals \left\{
\begin{array}{ll}
1 & \textrm{if } x' \in I(g') \textrm{ and } t'=1 \\
1 & \textrm{if } x' \notin I(g') \textrm{ and } t'=0 \\
0 & \textrm{otherwise} \\
\end{array}
\right.$$
We show inductively via the structure of $C$ that $A(g',x',t') = 1$ \ifAndOnlyIf $\beta(g',x',t')$ has an accepting subtree. \\
\underline{Base case:} Let $g'$ be an input gate. Then the algorithm will reach Line \ref{L_Blatt} and accept exactly under the conditions under which $A(g',x',t') = 1$ holds. \\
\underline{Induction step:} Let $g'$ be a gate of $C$ with predecessors. We assume that the statement holds for all $x \in \natNum$ and for all $t \in \{ 0, 1 \}$ and $p(g' )$ or, respectively, $p_1(g')$ and $p_2(g')$ (induction hypothesis). Let $x' \in \natNum$ and $t \in \{ 0, 1 \}$. W.l.o.g. let $x' \leq m = 2^{|C|} + 1$. Otherwise, the value $m$ is assigned to $x$ in Line \ref{L_m}. By \Cref{Folg1}:
$$ x' \in I(g') \Leftrightarrow m \in I(g')$$
Hence if $x' > m$, we get $A(g', x', t') = A(g', m, t')$ by \Cref{Folg1}. The algorithm does not branch in this line. Therefore, $\beta(g', m, t')$ is a subtree of $\beta(g', x', t')$ that has an accepting subtree \ifAndOnlyIf $\beta(g', x', t')$ has one.

\emph{Case 1:} We have $\alpha(g') = \compl$. The algorithm sets $(g,x,t) = (p(g'), x',1-t')$ in Line \ref{L_Komplement} without branches. Therefore, $\beta(g',x',t')$ has an accepting subtree \ifAndOnlyIf $\beta(p(g'),x',1-t')$ has an accepting subtree. The following applies:
\begin{align*}
A(g',x',t') & = \left\{
\begin{array}{ll}
1 & \textrm{if } x' \in I(g') \textrm{ and } t'=1 \\
1 & \textrm{if } x' \notin I(g') \textrm{ and } t'=0 \\
0 & \textrm{otherwise} \\
\end{array}
\right. \\
& = \left\{
\begin{array}{ll}
1 & \textrm{if } x' \notin I(p(g')) \textrm{ and } 1-t'=0 \\
1 & \textrm{if } x' \in I(p(g')) \textrm{ and } 1-t'=1 \\
0 & \textrm{otherwise} \\
\end{array}
\right. \\
& = A(p(g'), x', 1-t')
\end{align*}
By the induction hypothesis, $\beta(p(g'),x',t')$ has an accepting subtree \ifAndOnlyIf $A(p(g'),x',t') = 1$. Therefore, $\beta(g',x',t')$ has an accepting subtree \ifAndOnlyIf $A(g',x',t') = 1$.

\emph{Case 2:} We have $\alpha(g') = \cup $ and $t' = 0$. The algorithm reaches Line \ref{L_Vereinigung0} and is in a universal state. $(g,x,t)$ is set from $(g',x',t')$ to $(p_1(g'),x',t')$ and $(p_2(g'),x ',t')$ in parallel. Hence $\beta(g',x',t')$ has an accepting subtree \ifAndOnlyIf $\beta(p_1(g'),x',t')$ and $\beta(p_2(g') 'x',t')$ each have an accepting subtree. The following applies:
\begin{align*}
A(g',x',t') & = \left\{
\begin{array}{ll}
1 & \textrm{if } x' \notin I(g')\\
0 & \textrm{otherwise} \\
\end{array}
\right. \\
& = \left\{
\begin{array}{ll}
1 & \textrm{if } x' \notin I(p_1(g')) \textrm{ and } x' \notin I(p_2(g'))\\
0 & \textrm{otherwise} \\
\end{array}
\right. \\
& = \left\{
\begin{array}{ll}
1 & \textrm{if } A(p_1(g'),x',t')=1 \textrm{ and } A(p_2(g')',x',t')=1\\
0 & \textrm{otherwise} \\
\end{array}
\right. 
\end{align*}
From the induction hypothesis, it follows that $\beta(g',x',t')$ has an accepting subtree \ifAndOnlyIf $A(g',x',t') = 1 $.

\emph{Case 3:} We have $\alpha(g') = \cup $ and $t' = 1$. The algorithm reaches Line \ref{L_Vereinigung1} and is in an existential state. $(g,x,t)$ is set from $(g',x',t')$ to $(p_1(g'),x',t')$ and $(p_2(g'),x ',t')$ in parallel. $\beta(g',x',t')$ has an accepting subtree \ifAndOnlyIf $\beta(p_1(g'),x',t')$ or $\beta(p_2(g'), x',t')$ has an accepting subtree. The following applies:
\begin{align*}
A(g',x',t') & = \left\{
\begin{array}{ll}
1 & \textrm{if } x' \in I(g')\\
0 & \textrm{otherwise} \\
\end{array}
\right. \\
& = \left\{
\begin{array}{ll}
1 & \textrm{if } x' \in I(p_1(g')) \textrm{ or } x' \in I(p_2(g'))\\
0 & \textrm{otherwise} \\
\end{array}
\right. \\
& = \left\{
\begin{array}{ll}
1 & \textrm{if } A(p_1(g'),x',t')=1 \textrm{ or } A(p_2(g')',x',t')=1\\
0 & \textrm{otherwise} \\
\end{array}
\right. 
\end{align*}

From the induction hypothesis, it follows that $\beta(g',x',t')$ has an accepting subtree \ifAndOnlyIf $A(g',x',t') = 1 $.

\emph{Case 4:} We have $\alpha(g') = + $ and $t' = 1$. The algorithm reaches Line \ref{L_Addition1}. For each $a \in [0, x'] \cap \mathbb {N}$, $\beta(g',x',t')$ has a subtree $\beta(p_1(g'),x'- a,t')$ and a subtree $\beta(p_2(g'),a,t')$, which are reached by a chain of existential states and a universal state. Hence $\beta(g',x',t')$ has an accepting subtree \ifAndOnlyIf there is a $a \in [0, x'] \cap \mathbb {N}$ such that $\beta( p_1(g'),x'-a,t')$ and $\beta(p_2(g'),a,t')$ have an accepting subtree.

\begin{align*}
A(g',x',t') & = \left\{
\begin{array}{ll}
1 & \textrm{if } x' \in I(g')\\
0 & \textrm{otherwise} \\
\end{array}
\right. \\
& = \left\{
\begin{array}{ll}
1 & \textrm{if } \exists a \in \natNum: x'-a \in I(p_1(g')) \textrm{ and } a \in I(p_2(g'))\\
0 & \textrm{otherwise} \\
\end{array}
\right. \\
& = \left\{
\begin{array}{ll}
1 & \textrm{if } \exists a \in [0, x'] \cap \natNum: x'-a \in I(p_1(g')) \textrm{ and } a \in I(p_2(g'))\\
0 & \textrm{otherwise} \\
\end{array}
\right. \\
& = \left\{
\begin{array}{ll}
1 & \textrm{if } \exists a \in [0, x'] \cap \natNum: \\ 
  & A(p_1(g'), x'-a, t') = 1 \textrm{ and } A(p_2(g'), a, t') = 1 \\
0 & \textrm{otherwise} \\
\end{array}
\right.
\end{align*}

According to the induction hypothesis, $\beta(g',x',t')$ has an accepting subtree \ifAndOnlyIf $A(g',x',t') = 1 $.

\emph{Case 5:}
We have $\alpha(g') = + $ and $t' = 0$. The statement can be shown analogously to Case 4. Universal and existential states, the quantifier and ``and'' and ``or'' are each swapped.

\emph{Case 6:} We have $\alpha(g') = / $ and $t' = 1$. The algorithm reaches Line \ref{L_Division1}. For each $a \in [1, m] \cap \mathbb {N}$, $\beta(g',x',t')$ has a subtree $\beta(p_1(g'),a \cdot x 't')$ and a subtree $\beta(p_2(g'),a,t')$, which are reached through a chain of existential states and a universal state. Hence $\beta(g',x',t')$ has an accepting subtree \ifAndOnlyIf there is an $a \in [1, m] \cap \mathbb {N}$ such that $\beta(p_1 (g'),a \cdot x',t')$ and $\beta(p_2(g'),a,t')$ have an accepting subtree.

\begin{align*}
A(g',x',t') & = \left\{
\begin{array}{ll}
1 & \textrm{if } x' \in I(g')\\
0 & \textrm{otherwise} \\
\end{array}
\right. \\
& = \left\{
\begin{array}{ll}
1 & \textrm{if } \exists a \in \natNum^+:  a \cdot x' \in I(p_1(g')) \textrm{ and } a \in I(p_2(g'))\\
0 & \textrm{otherwise} \\
\end{array}
\right. 
\end{align*}

By \Cref{Folg1} the following applies for all $a > m$:
$$ a \cdot x' \in I(p_1(g')) \Leftrightarrow  m \cdot x' \in I(p_1(g')) $$ 
and 
$$ a \in I(p_2(g')) \Leftrightarrow  m \in I(p_2(g')) $$

That is why:
\begin{align*}
A(g',x',t') & = \left\{
\begin{array}{ll}
1 & \textrm{if } \exists a \in [1, m] \cap \natNum:  a \cdot x' \in I(p_1(g')) \textrm{ and } a \in I(p_2(g'))\\
0 & \textrm{otherwise} \\
\end{array}
\right. \\
& = \left\{
\begin{array}{ll}
1 & \textrm{if } \exists a \in [1, m] \cap \natNum: \\
  &  A(g', a \cdot x,t') = 1 \textrm{ and }  A(g', a ,t') = 1 \\
0 & \textrm{otherwise} \\
\end{array}
\right. 
\end{align*}

Therefore, by the induction hypothesis, $\beta(g',x',t')$ has an accepting subtree \ifAndOnlyIf $A(g',x',t') = 1 $.

\emph{Case 7:} We have $\alpha(g') = / $ and $t' = 0$. The statement can be shown analogously to Case 6.

In each case, the computation tree $\beta(g_C,b,1)$ has an accepting subtree \ifAndOnlyIf $A(g_C,b,1) = 1$. This is the case \ifAndOnlyIf $b \in I(g_C) = I(C)$. Thus, we have proven that the algorithm is correct.

It remains to show that the alternating algorithm only requires \polyTime. The variable $g$ is assigned the value of each gate from $C$ at most once on every path from the root of $\beta_M(C,b)$ to a leaf, because $C$ is acyclic. In each iteration of the loop, $g$ changes. Therefore, the number of loop iterations is in $O(|C|)$. The running time of one iteration of the loop is $O(|C|^2)$. Therefore, the running time of \Cref{Algo1} is in $O(n^3)$, where $n$ is the size of the input $(C,b)$.
It is known that $\ATIME(n^3) \subseteq \DSPACE(n^3)$. The validity of the input can be checked in deterministic \logSpace.
Hence $MC( \cup, \compl , +, /) \in \DSPACE(n^3) \subseteq \PSPACE $. \qed
\end{proof}

\noindent Combined with results by McKenzie and Wagner~\cite[Lem. 5.1]{MW07} we get: 
\begin{corollary} \label{Koro1}
$\MC( \cup, \cap , \compl, +, / )$ and $\MC( \cup, \cap, +, / )$ are $\PSPACE$-complete.
\end{corollary}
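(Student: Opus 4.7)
The proof is an assembly step rather than a new argument: the upper bound is supplied by Satz1 and the lower bound is already established in the literature for the division-free versions of these problems. The plan is therefore to combine Satz1 with McKenzie and Wagner's Lem.~5.1 via the subset-inclusion remark stated earlier, taking care that each transfer step is a $\logReduction$ reduction so that PSPACE-hardness and PSPACE-membership both survive.

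For the upper bound, Satz1 gives $\MC(\cup,\cap,\compl,+,/) \in \PSPACE$ directly, and the inclusion $\{\cup,\cap,+,/\} \subseteq \{\cup,\cap,\compl,+,/\}$ combined with the earlier remark yields $\MC(\cup,\cap,+,/) \logReduction \MC(\cup,\cap,\compl,+,/)$; since $\PSPACE$ is closed under $\logReduction$, the smaller operation set inherits PSPACE-membership. For the lower bound, McKenzie and Wagner's Lem.~5.1 establishes that $\MC(\cup,\cap,+)$ is PSPACE-hard. Applying the subset remark again with $\cO_1 = \{\cup,\cap,+\}$ and $\cO_2$ equal to each of $\{\cup,\cap,+,/\}$ and $\{\cup,\cap,\compl,+,/\}$ gives log-space many-one reductions of $\MC(\cup,\cap,+)$ into both problems, transferring PSPACE-hardness.

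I do not expect any real obstacle here. The only points to double-check are that the cited Lem.~5.1 of \cite{MW07} genuinely yields $\logReduction$-hardness for the appropriate class (so that composition with our subset-inclusion reduction stays in $\mathrm{FL}$), and that Satz1's PSPACE membership is quoted in the form that allows the reduction from $\MC(\cup,\cap,+,/)$ to be absorbed. Once these routine verifications are done, both completeness statements follow in one line each.
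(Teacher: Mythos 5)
Your proposal matches the paper's own (one-line) argument exactly: the upper bound comes from \Cref{Satz1} together with the subset-inclusion remark, and the lower bound is imported from McKenzie and Wagner's Lem.~5.1 via the same remark, using closure of $\PSPACE$ under $\logReduction$. No gaps; this is the intended proof.
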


\subsubsection{No prime numbers without multiplication}
From \Cref{Folg1} we get another result: McKenzie and Wagner described how the set of prime numbers $\primes$ is represented by a $\{ \cup, \cap , \compl, \times \}$-circuit and were thus able to express Goldbach's conjecture as a membership problem \cite{MW07}. We can prove that $\{ \cup, \cap , \compl, +, / \}$-circuits are not suitable for expressing the prime numbers. First, we need the following statement about the structure of the sets described by $\{ \cup, \cap , \compl, +, / \}$-circuits: 
\begin{corollary}\label{Kor1}
Let $A \subseteq \natNum$ be such that $A$ and $\overline{A}$ are infinite. \\
There is no $\{ \cup, \cap , \compl, +, / \}$-circuit $C$ with $I(C) = A$. 
\end{corollary}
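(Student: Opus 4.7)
The plan is to derive this directly from \Cref{Folg1}, which says that for any $\{\cup, \cap, \compl, +, /\}$-circuit $C$ there is a threshold $n \leq 2^{|C|}+1$ such that the membership of every $z \geq n$ in $I(C)$ is determined by whether $n$ itself lies in $I(C)$. The key observation is that this forces a dichotomy on the set $I(C)$: it must be either finite or cofinite.

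More concretely, I would proceed by contradiction. Suppose $C$ is a $\{\cup, \cap, \compl, +, /\}$-circuit with $I(C) = A$, where both $A$ and $\overline{A}$ are infinite. Apply \Cref{Folg1} to obtain the threshold $n$. Split into two cases. If $n \in I(C)$, then every $z \geq n$ lies in $I(C)$, so $\overline{I(C)} = \overline{A} \subseteq [0, n)$ is finite, contradicting the assumption that $\overline{A}$ is infinite. If $n \notin I(C)$, then every $z \geq n$ lies outside $I(C)$, so $I(C) = A \subseteq [0, n)$ is finite, contradicting the assumption that $A$ is infinite. In either case we reach a contradiction, so no such circuit exists.

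There is no real obstacle here; the entire content of the corollary has already been packaged into \Cref{Folg1}, and the remaining step is just the observation that the conclusion of \Cref{Folg1} is exactly equivalent to ``$I(C)$ is finite or cofinite.'' The only thing to double-check while writing is that the threshold $n$ is well-defined regardless of which operations actually occur in $C$ (it is, since the lemma is stated for arbitrary nonempty $\cO \subseteq \{\cup, \cap, \compl, +, /\}$), and that the dichotomy argument does not accidentally require $A$ or $\overline{A}$ to be more than infinite (it does not---infinitude of both sides is exactly what rules out both the finite and cofinite cases).
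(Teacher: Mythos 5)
Your proof is correct and takes essentially the same approach as the paper: both arguments rest entirely on \Cref{Folg1}, with your case split on whether the threshold lies in $I(C)$ being just a repackaging of the paper's step of applying \Cref{Folg1} twice (once to a large element of $A$ and once to a large element of $\overline{A}$) to reach the same contradiction.
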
 
\begin{proof}
Suppose that there is such a $\{ \cup, \cap , \compl, +, / \}$-circuit $C$. $A$ and $\overline{A}$ are infinite. Hence, there is $a \in A$ with $a \geq 2^{|C|} + 1$ and $b \in \overline{A}$ with $b \geq 2^{|C|} + 1$ .
Applying \Cref{Folg1} twice we get:
$$a \in A = I(C) \quad \Rightarrow \quad 2^{|C|} + 1 \in I(C) \quad \Rightarrow \quad b \in I(C) = A$$
This is a contradiction to $b \in \overline{A}$. Hence, such a circuit $C$ does not exist. \qed
\end{proof}

\noindent Since both $\primes$ and $\primes \setminus \natNum$ are infinite, it follows:
\begin{corollary} \label{Sequence6}
There is no $\{ \cup, \cap , \compl, +, / \}$-circuit $C$ with $I(C) = \primes$.
\end{corollary}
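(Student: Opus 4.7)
The plan is to derive this statement as an immediate application of the preceding \Cref{Kor1}, which asserts that no $\{ \cup, \cap, \compl, +, / \}$-circuit can represent a set $A \subseteq \natNum$ for which both $A$ and $\overline{A}$ are infinite. So the entire task reduces to verifying the two infinitude hypotheses for $A = \primes$.

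First I would note that $\primes$ is infinite, which is Euclid's classical theorem and may be cited without further comment. Next I would observe that $\overline{\primes}$ is also infinite; the cleanest justification is that $\overline{\primes}$ contains every even number $\geq 4$ (each such number is divisible by $2$ and strictly greater than $2$, hence composite), and this is already an infinite subset of $\overline{\primes}$. Alternatively one could simply note $\{n^2 \mid n \geq 2\} \subseteq \overline{\primes}$.

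With both hypotheses of \Cref{Kor1} verified for $A = \primes$, the conclusion follows directly: there is no $\{ \cup, \cap, \compl, +, / \}$-circuit $C$ with $I(C) = \primes$, completing the proof with \qed.

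There is essentially no obstacle here; the entire technical content has been absorbed into \Cref{Lemma1} and \Cref{Folg1}, and distilled into \Cref{Kor1}. The only thing to check is that the corollary is genuinely a one-line invocation, and the remark preceding the statement (``Since both $\primes$ and $\primes \setminus \natNum$ are infinite'') already signals exactly this — though strictly speaking the intended set in that remark is $\natNum \setminus \primes$ rather than $\primes \setminus \natNum$, a harmless typographical slip that my proof would silently correct by writing $\overline{\primes}$.
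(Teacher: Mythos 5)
Your proposal is correct and matches the paper's argument exactly: the corollary is stated as an immediate consequence of \Cref{Kor1} once one checks that $\primes$ and $\natNum \setminus \primes$ are both infinite, which is all the paper does (and you correctly identify the harmless typo $\primes \setminus \natNum$ in the paper's preceding remark). No further comment is needed.
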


\subsection{Circuits without complement}
The sets calculated by the circuits without complement are finite and upper-bounded. The upper bounds are as follows: 

\begin{makethm}{lemma}{Lemma3}
(i) For each $\{ \cup, \cap , +, / \}$-circuit $C$, $I(C) \subseteq \left[ 0, 2^{|C|} \right]$. \\
(ii) For each $\{ \cup, \cap , +, \times, / \}$-circuit $C$, $I(C) \subseteq \left[ 0, 2^{2^{|C|}} \right]$.
\end{makethm}

\begin{proof}
\begin{enumerate} 
\item[(i)] $I(C)$ is finite. The claim follows directly from \Cref{Folg1}. 

\item[(ii)] Let $C_g$ be the circuit formed from $C$ using $g$ as the output gate and removing all gates from which $g$ cannot be reached. We show by structural induction that for all gates $g$, $I(g) \subseteq \left[ 0, 2^{2^{|C_g|}} \right]$. 

Let $C$ be a $\{ \cup, \cap , +, \times, / \}$-circuit. We show $I(C) \subseteq \left[ 0, 2^{2^{|C|}} \right]$ by structural induction. Let $C_g$ be the circuit formed from $C$ by using $g$ as the output gate and removing all gates from which $g$ cannot be reached.

\underline{Base case:} Let $g$ be an input gate. Hence $I(C_g) = \{ \alpha(g) \}$.  The binary encoding of $\alpha(g)$ 
is contained in the encoding of $C_g$. This shows $\alpha(g) \leq 2^{|C_g|} \leq 2^{2^{|C_g|}}$.

\underline{Induction step:} Let $g$ be a gate with predecessors such that $$I(p_i(g)) \subseteq \left[ 0, 2^{2^{\left| C_{p_i(g)} \right|}} \right]$$ holds for all $i \in \{ 1,2 \}$. W.l.o.g. let $max(I(p_1(g))) \geq max(I(p_2(g)))$. Then we get:
{ \small $$
max(I(g)) \leq (max(I(p_1(g))))^2 +1
          \leq \left(  2^{2^{\left| C_{p_1(g)} \right| }} \right)^2 +1 
          = 2^{2^{\left( \left| C_{p_1(g)} \right| +1 \right) }} +1
          \leq 2^{2^{ \left| C_{p(g)} \right|}}
$$ } \qed
\end{enumerate}
\end{proof}

\noindent In circuits without complement and union, the sets calculated for the gates always contain at most one element. 

\begin{theorem} \label{Satz3}
(i) $\MC(\{ \cap, +, / \}) \in \mathrm{P}$ and (ii) $\MC(\{ \cap, +, \times, / \}) \in \E$.
\end{theorem}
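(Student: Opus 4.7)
The plan is to exploit the sentence immediately preceding the theorem: in any $\cO$-circuit with $\cO \subseteq \{\cap, +, \times, /\}$ every gate's result set contains at most one element. I would first verify this singleton invariant by structural induction on the gates. Input gates give singletons. Intersection of two sets of size at most one has size at most one. For $\sigma \in \{+, \times\}$ and predecessors with result sets $S_1, S_2$, if either $S_i$ is empty then $S_1 \sigma S_2 = \emptyset$; otherwise $S_1 = \{a\}$, $S_2 = \{b\}$, and $S_1 \sigma S_2 = \{a \sigma b\}$. For a $/$-gate with singleton inputs $\{a\}, \{b\}$ the result is $\{a/b\}$ when $b \neq 0$ and $b \mid a$, and $\emptyset$ in every other case, including when a predecessor is empty or $b = 0$.

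With this invariant in hand, the algorithm is straightforward: traverse the gates in topological order, and store for each gate $g$ either a special symbol $\bot$ (indicating $I(g) = \emptyset$) or the unique binary-encoded element of $I(g)$. The membership question then reduces to comparing the value stored at $g_C$ with the query $b$. The per-gate work consists of a fixed-cost case distinction plus one arithmetic operation: equality test for $\cap$, schoolbook addition for $+$, schoolbook multiplication for $\times$, and a divisibility test followed by an exact integer division for $/$.

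For part (i), \Cref{Lemma3}(i) bounds every value computed at any gate by $2^{|C|}$, so all operands have bit-length at most $|C|$. Each operation then runs in time polynomial in $|C|$, and since there are at most $|C|$ gates the whole algorithm runs in polynomial time, proving $\MC(\{\cap,+,/\}) \in \mathrm{P}$. For part (ii), \Cref{Lemma3}(ii) only gives the doubly exponential bound $2^{2^{|C|}}$, so the operands can have bit-length up to $2^{|C|}$. Schoolbook arithmetic on numbers of that bit-length runs in time polynomial in $2^{|C|}$, hence in $2^{O(|C|)}$. Multiplying by the $O(|C|)$ gate count still yields total running time $2^{O(|C|)} = 2^{O(n)}$, which is the definition of $\E$.

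The only real subtlety is getting the $/$-case of the singleton invariant correct, because the definition of $/$ both excludes $0$ as a divisor and demands exact divisibility, so the induction must explicitly produce $\emptyset$ in those corner cases rather than silently dropping them; the rest of the argument is routine bookkeeping together with a direct appeal to the size bounds of \Cref{Lemma3}.
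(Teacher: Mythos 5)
Your proposal is correct and follows essentially the same route as the paper: both evaluate the circuit gate by gate, exploiting that every result set is a singleton or empty, and invoke \Cref{Lemma3} to bound the bit-length of the stored values ($|C|$ bits for part (i), $2^{|C|}$ bits for part (ii)), which yields polynomial time and $2^{O(n)}$ time respectively. Your write-up is merely more explicit about the singleton invariant and the corner cases of the $/$-gate, which the paper leaves implicit.
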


\begin{proof}
(i) Let $b \in \natNum$ and let $C$ be an $\{ \cap, +, / \}$-circuit with output gate $g_C$. We want to check $(C,b) \in \MC(\mathcal {O})$. For each gate $g$ from $C$, $|I(g)| \leq 1$ and for all $n \in I(g)$, $n \leq 2^{|C|}$ by \Cref{Lemma3} (i). Hence $I(g)$ can be stored in $|C|$ bits. Compute $I(C)$ iteratively by computing $I(g)$ for each gate $g$. In each iteration, find a gate $g$ with predecessors $g_0$ and $g_1$ such that $I(g_0)$ and $I(g_1)$ are already calculated and calculate $I(g)$. $I(g)$ can be calculated in polynomial time. Repeat this until $I(g_C)$ is calculated. Since the number of gates is upper-bounded by $|C|$, only \polyTime is needed to calculate the set $I(g_C)$. Now check $b \in I(g_C) = I(C)$. \\
(ii) Analogous to (i) with \Cref{Lemma3} (ii). \qed
\end{proof}

\begin{remark}
\Cref{Satz3} shows that $\MC(\{ \cap, +, \times, / \})$ is in the class $\EXP$ that is closed under log-space many-one reduction.
\end{remark}

In circuits with $\cup$-gates the sets of the gates may contain multiple elements each. We can guess the representatives that prove $(C,b) \in \MC( \cup , \cap , +, \times , / )$ by nondeterminism, but in order to calculate a number $b$ in the output gate several different values may be required from each gate of the circuit. 
\begin{theorem} \label{Satz6}
$ \MC( \cup , \cap , +, \times , / ) \in \NEXP $.
\end{theorem}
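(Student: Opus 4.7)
The plan is to use a standard nondeterministic ``proof-tree'' witness. Starting from the output gate $g_C$ with claimed value $b$, I would unfold the DAG $C$ into a binary tree $T$ by duplicating every shared subcircuit: each internal node of $T$ is associated with some gate $g$ and has children associated with the predecessors $g_1, g_2$ of $g$; leaves of $T$ correspond to input gates. Since $C$ has at most $|C|$ gates and therefore depth at most $|C|$, the unfolded tree has at most $2^{|C|+1}$ nodes. The nondeterministic verifier then guesses at every node $t \in T$ a natural number $v_t$, with the root forced to carry the value $b$.

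By \Cref{Lemma3}~(ii), any value that can legitimately appear at a gate of $C$ lies in $[0, 2^{2^{|C|}}]$, so each $v_t$ is encoded in $O(2^{|C|})$ bits and the full certificate has length $2^{O(|C|)}$. The verifier then checks local consistency at every node $t$ of $T$, with children $t_1, t_2$ associated with predecessors $g_1, g_2$ of the gate $g$: an input gate demands $v_t = \alpha(g)$; a $\cup$-gate demands $v_t \in \{ v_{t_1}, v_{t_2} \}$ (guess which); a $\cap$-gate demands $v_t = v_{t_1} = v_{t_2}$; and for $\sigma \in \{ +, \times, / \}$ the check is $v_t = v_{t_1} \sigma v_{t_2}$ interpreted as $v_{t_2} > 0$ and $v_{t_1} = v_t \cdot v_{t_2}$ in the division case. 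Each local check is a single arithmetic operation on $O(2^{|C|})$-bit integers and therefore runs in time $2^{O(|C|)}$. Summing over the $2^{O(|C|)}$ nodes gives total verification time $2^{O(|C|)}$, placing the problem in $\NEXP$.

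Correctness follows from a routine structural induction over $T$: the existence of a locally consistent assignment with $v_{\text{root}} = b$ is equivalent to $v_t \in I(g)$ for every node $t$ with associated gate $g$, and in particular to $b \in I(g_C) = I(C)$; conversely, from any $b \in I(C)$ one can pick such values top-down in $T$. The main subtlety I expect is just carrying through the bookkeeping for $/$-gates, where the divisor $v_{t_2}$ must be guessed freely (not derived from $v_t$) but is still confined to $O(2^{|C|})$ bits by \Cref{Lemma3}~(ii); this adds only a constant number of extra guesses per division node and does not affect the $\NEXP$ bound.
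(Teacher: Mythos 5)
Your proposal is correct and follows essentially the same route as the paper: the paper likewise unfolds the circuit into an (exponentially larger) formula, guesses a value at each gate, and verifies local consistency, obtaining an $\NP$ algorithm for formulas and hence $\NEXP$ for circuits. The only cosmetic difference is the bound on guessed values (the paper uses $\prod_{a\in I}(a+1)$ over the formula's input labels, you invoke \Cref{Lemma3}~(ii)); both yield certificates of length $2^{O(|C|)}$.
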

\begin{proof}
For this proof, let $\cO \defEquals \{ \cup , \cap , +, \times , / \}$.
An $\cO$-formula is a $\cO$-circuit such that the gates have a maximum output degree of $1$. First we consider the membership problem $\mathrm{MF}(\mathcal {O})$ for $\cO$-Formulas.
We first want to show that $\mathrm{MF}(\cO) \in \NP$.
Let $F=(G,E,g_F, \alpha )$ be a $\cO$-formula and $b \in \natNum$. Let $I$ be the set of labels of 
input gates of $F$. Then $m \defEquals \prod_{a \in I} (a+1)$ is an upper bound for all $I(g)$ such that $g$ is a gate from $F$.
For each gate $g$ from $F$, guess a number $b_g \leq m$ and check, if the guessed numbers prove $b \in I(g_F)$. Now let $C$ be an $\cO$-circuit and $b' \in \natNum$. To decide $(C,b') \in \MC( \mathcal {O} )$, we expand $C$ into an equivalent $\mathcal {O}$-formula $F_C$. (Similar to \cite[Thm. 6.2]{MW07}.) Then we use the $\NP$-algorithm above to test whether $(F,b') \in \mathrm{MF}( \cO )$. Since $F$ is exponentially larger than $C$ at maximum, this yields a $\NEXP$-algorithm for $ \MC( \cO )$. \qed
\end{proof}
\noindent McKenzie and Wagner proved that $\MC( \cup , \cap, +, \times )$ is $\NEXP$-complete \cite[Lem. 6.1 and Thm. 6.2]{MW07}. With \Cref{Satz6} we get:
\begin{corollary} 
\begin{enumerate}
\item[(i)] $\MC( \cup , \cap , +, \times , / )$ is $\NEXP$-complete.
\item[(ii)] $\MC( \cup , \cap , +, \times , / ) \logEquiv \MC( \cup , \cap , +, \times)$.
\end{enumerate}
\end{corollary}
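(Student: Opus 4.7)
The plan is to derive both parts immediately from \Cref{Satz6} together with the McKenzie--Wagner completeness result for $\MC(\cup,\cap,+,\times)$ cited just before the corollary. For part (i), I would obtain the upper bound $\MC(\cup,\cap,+,\times,/)\in\NEXP$ directly from \Cref{Satz6}. For the matching lower bound, I would invoke the monotonicity remark stated right after the definition of $\MC$: whenever $\emptyset\neq\cO_1\subseteq\cO_2$, one has $\MC(\cO_1)\logReduction\MC(\cO_2)$. Applying this with $\cO_1=\{\cup,\cap,+,\times\}$ and $\cO_2=\{\cup,\cap,+,\times,/\}$ transfers the $\NEXP$-hardness of $\MC(\cup,\cap,+,\times)$ to $\MC(\cup,\cap,+,\times,/)$, establishing $\NEXP$-completeness under $\logReduction$.

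For part (ii), the same monotonicity remark gives $\MC(\cup,\cap,+,\times)\logReduction\MC(\cup,\cap,+,\times,/)$. For the reverse direction, I would use that $\MC(\cup,\cap,+,\times)$ is $\NEXP$-hard under $\logReduction$: since $\MC(\cup,\cap,+,\times,/)\in\NEXP$ by part~(i), it reduces in logarithmic space to any $\logReduction$-hard problem for $\NEXP$, in particular to $\MC(\cup,\cap,+,\times)$. Combining both directions yields $\MC(\cup,\cap,+,\times,/)\logEquiv\MC(\cup,\cap,+,\times)$.

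There is essentially no obstacle here; the corollary is a direct composition of previously established facts. The only point that merits a brief check is that McKenzie--Wagner's completeness indeed holds under the logarithmic-space many-one reducibility convention fixed in \Cref{Grundlagen}, which is the case. Thus both items follow in one or two lines each, without further circuit constructions.
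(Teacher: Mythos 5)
Your proposal is correct and matches the paper's (implicit) argument exactly: the upper bound comes from \Cref{Satz6}, the lower bound from the monotonicity remark together with the $\NEXP$-completeness of $\MC(\cup,\cap,+,\times)$ from McKenzie and Wagner, and the equivalence in (ii) follows from both problems being $\NEXP$-complete under $\logReduction$. No further comment is needed.
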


\subsubsection{Considerations for $\MC(\cap, +, \times, /)$ using alternating Turing machines} 1212
While creating this work, there was an approach to solve $\MC(\cap, +, \times, /)$ in alternating polynomial time, which did not lead to success: With an alternating \TM it is possible for a $\{ \cap, +, \times, / \}$-circuit $C$ to guess prime numbers $p$ up to a size of $O(2^{2|C|})$. The result sets of the gates are each stored modulo $p$. $O(2|C|)$ bits per gate are sufficient for this. The \TM should check $ (f \text{ mod } p) \in \{ e \text{ mod } p \mid e \in I(g) \}$ instead of $ f \in I(g)$ for all such prime numbers in parallel and only accept if the tests for all such prime numbers are successful. If this is possible, the Chinese Remainder Theorem can be used to prove that there is only one number $b \leq 2^{2^{|C|}}$ for which the algorithm accepts $(C,b)$. This approach has two problems for $/$-gates:

\begin{enumerate}
\item The alternating \TM cannot check modulo $p$ whether for two numbers $x_1$ and $x_2$, both divisible by $p$, $x_1$ is divisible by $x_2$. The result of $x_1 / x_2$ modulo $p$ is undefined in this case, because $x_1 = x_2 \times a$ applies modulo $p$ for all natural numbers $a$. This problem can be solved by accepting for all prime numbers that appear as prime factors in the circuit. Using the Chinese Remainder Theorem it can be shown that the uniqueness of $b$ with $I(C) = \{b \}$ is guaranteed if $b$ modulo $p$ is given for $2^{|C|}$ prime numbers. This is guaranteed to be the case because we guess prime numbers up to size $O(2^{2|C|})$.
\item The integers modulo $p$ corresponding to the finite field $\mathbb{F}_p$. In this field, every division that is not a division by zero of the field has a result. We consider the following example: 
\begin{center}
\hspace{-0.75cm}
\begin{tikzpicture}
	\newcommand\dx{0.8}
	\newcommand\dy{1}
	\newcommand\minSize{1cm}
	\newcommand\xoffset{7.5*\dx}
	
	\node[fill=white,radius=0.5, minimum size=1.4cm] (labelA) at (-0.2*\dx, -0*\dy) {$C_1:$};
	
	\node[draw,circle,fill=white,radius=0.5, minimum size=\minSize] (inAa) at (1*\dx, -0*\dy) {$0$};
	\node[draw,circle,fill=white,radius=0.5, minimum size=\minSize] (inAb) at (3*\dx, -0*\dy) {$4$};
	\node[draw,circle,fill=white,radius=0.5, minimum size=\minSize] (inAc) at (5*\dx, -0*\dy) {$2$};
	
	\node[draw,circle,fill=white,radius=0.5, minimum size=\minSize] (divA) at (4*\dx, -1*\dy) {$/$};
	
	\node[draw,circle,fill=white,radius=0.5, minimum size=\minSize] (multA) at (2.5*\dx, -2*\dy) {$\times$};
	
	\draw[->] (inAb) to (divA);
	\draw[->] (inAc) to (divA);
	
	\draw[->] (inAa) to (multA);
	\draw[->] (divA) to (multA);

	\node[fill=white,radius=0.5, minimum size=1.4cm] (labelB) at (\xoffset - 0.2*\dx, -0*\dy) {$C_2:$};
	
	\node[draw,circle,fill=white,radius=0.5, minimum size=\minSize] (inBa) at (\xoffset + 1*\dx, -0*\dy) {$0$};
	\node[draw,circle,fill=white,radius=0.5, minimum size=\minSize] (inBb) at (\xoffset + 3*\dx, -0*\dy) {$2$};
	\node[draw,circle,fill=white,radius=0.5, minimum size=\minSize] (inBc) at (\xoffset + 5*\dx, -0*\dy) {$4$};
	
	\node[draw,circle,fill=white,radius=0.5, minimum size=\minSize] (divB) at (\xoffset + 4*\dx, -1*\dy) {$/$};
	
	\node[draw,circle,fill=white,radius=0.5, minimum size=\minSize] (multB) at (\xoffset + 2.5*\dx, -2*\dy) {$\times$};
	
	\draw[->] (inBb) to (divB);
	\draw[->] (inBc) to (divB);
	
	\draw[->] (inBa) to (multB);
	\draw[->] (divB) to (multB);
\end{tikzpicture}
\end{center}
We have $I(C_1) = \{ 0 \}$ and $I(C_2) = \emptyset$. The alternating \TM can find a number $x_p$ for every prime number $p > 2$, so that in the field $\mathbb{F}_p$ the equation $2 / 4 = x_p$ holds. For all such $p$, $0 \times x_p = 0$. Therefore, the alternating \TM would accept $(C_2, 0)$ even though $I(C_2) = \emptyset$.
\end{enumerate}

\subsection{Circuits without addition} 
We extend the proofs for $\MC(\cap, \times) \in \mathrm{P}$ and $\MC(\cup, \cap , \compl, \times) \in \PSPACE$ by McKenzie and Wagner \cite[Section 3]{MW07} to prove $\MC(\cap, \times, /) \in \mathrm{P}$ as well as $\MC(\cup, \cap , \compl, \times, /) \in \PSPACE$. 

\begin{definition}
Let $a_1, \dots, a_n \in \natNum \setminus \{ 0 \}$. Then natural numbers $k \geq 1$, $q_1, \dots q_k \geq 2$ and $e_{1,1}, \dots e_{n,k} \geq 0$ exist such that $gcd(q_i, q_j) = 1$ for all $i,j \in \{1, \dots, k \}$ with $i \neq j$ and $a_i = \prod_{j=1}^k q_j^{e_{i,j }}$ for all $i \in \{1, \dots, n \}$. $\{ q_1, \dots q_k \}$ is called \emph{GCD-free basis} of $\{ a_1, \dots, a_n \}$. 
\end{definition} 

A GCD-free basis can be calculated in polynomial time \cite[Prop. 3.1]{MW07}. To solve the membership problem with input $(C,b)$, we will convert the numbers in the circuit $C$ to their factorization with respect to the GCD-free base (represented as a vector of powers) of all the numbers from the input gates and the number $b$. This leads to the following definition: 
\begin{definition}
Let $\emptyset \neq \cO \subseteq \{ \cup, \cap, \compl, +, -, \}$. $\MC^{*}(\cO)$ is the membership problem for $\cO$-circuits over $\natNumInf^k \defEquals \natNum^k \cup \{ \infty \}$ for $k \geq 1$. The arithmetic operations on $\natNum^k$ are defined component-wise. The vector $(a_1, \dots, a_k) - (b_1, \dots, b_k)$ is defined \ifAndOnlyIf $a_i - b_i$ is a natural number for all $i \in [k]$ . For all $m \in \mathbb {N}^k$, $\infty + m = \infty = m + \infty$, $\infty + \infty = \infty$, $\infty - m = \infty$ and $m - \infty$ as well as $\infty - \infty$ are undefined.
\end{definition} 
\begin{remark}
For each $k \geq 1$, $(\natNumInf^k, +, (0, \dots, 0))$ is a monoid.
\end{remark}

\begin{theorem} \label{Satz9}
For each $\cO$ such that $\emptyset \neq \cO \subseteq \{ \cup, \cap \}$, 
\begin{enumerate}
\item[(i)] $\MC(\cO \cup \{ \times, / \}) \pReduction \MC^*(\cO \cup \{ +, - \})$ and 
\item[(ii)] $\MC(\cO \cup \{  \compl, \times, / \}) \pspaceReduction \MC^*(\cO \cup \{ \compl, +, - \})$.    
\end{enumerate}
\end{theorem}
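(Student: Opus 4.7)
The plan is to translate a $\cO \cup \{\times,/\}$-circuit over $\natNum$ into an isomorphic circuit over $\natNumInf^k$ with componentwise addition and subtraction, by encoding each positive natural number through its vector of exponents with respect to a suitable GCD-free basis.

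The first step is to compute, in polynomial time via \cite[Prop.~3.1]{MW07}, a GCD-free basis $\{q_1,\dots,q_k\}$ of the set $S$ consisting of all input labels of $C$ together with the query number $b$. Since every nonzero $n \in S$ factors uniquely as $n = \prod_j q_j^{e_j}$, define $\phi(n) \defEquals (e_1,\dots,e_k) \in \natNum^k$ and $\phi(0) \defEquals \infty$. Under $\phi$, multiplication becomes componentwise addition and exact division becomes componentwise subtraction: $\phi(a \cdot a') = \phi(a) + \phi(a')$, and if $a' \neq 0$ divides $a$ exactly then $\phi(a/a') = \phi(a) - \phi(a')$. Crucially, the conventions $\infty + m = \infty$, $\infty - m = \infty$, and the undefinedness of $m - \infty$ and of $\infty - \infty$ correspond exactly to $0 \cdot m = 0$, $0/m = 0$, and the exclusion of $0$ as a divisor in the definition of $/$. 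The translated circuit $C^*$ is built by relabeling each input $n$ with $\phi(n)$, replacing $\times$-gates by $+$-gates and $/$-gates by $-$-gates, and keeping $\cup$, $\cap$ (and, in~(ii), $\compl$) unchanged; the query point becomes $\phi(b)$.

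For part~(i) a straightforward induction on the gates of $C$ shows that $\phi$ restricts to a bijection between $I(g) \setminus \{0\}$ and $I(g^*) \cap \natNum^k$ at every gate $g$, with $0 \in I(g) \Leftrightarrow \infty \in I(g^*)$. The induction step for each of $\cup, \cap, \times, /$ is immediate from the identities above, using that in the absence of $\compl$ no gate ever produces a natural number whose prime factorization uses primes outside the basis. Hence $b \in I(C)$ if and only if $\phi(b) \in I(C^*)$, and the whole construction is polynomial-time, giving the desired $\pReduction$.

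The main obstacle lies in part~(ii). With complement, $I(g)$ in $C$ now generically contains infinitely many natural numbers whose factorizations use primes outside $\{q_1,\dots,q_k\}$, and such numbers are invisible to $\phi$; they may later be pruned or promoted by downstream $\cap, \times, /$ operations. The heart of the argument is to show that, for the representable point $\phi(b)$, interpreting every $\compl$-gate as complementation inside $\natNumInf^k$ yields the same answer as complementation in $\natNum$, i.e.\ that the restriction of $I(g)$ to representable numbers coincides with $I(g^*) \cap \natNum^k$ at every gate. Following the template of \cite[Section~3]{MW07}, this invariance is verified inductively, but the intermediate bookkeeping at $\compl$-gates---in particular manipulating the complements of vector sets and ensuring that the basis captures every prime that can influence membership of $\phi(b)$---can only be carried out in polynomial space rather than polynomial time, which is precisely what weakens the reduction from $\pReduction$ to $\pspaceReduction$.
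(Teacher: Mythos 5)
Your part~(i) is correct and is essentially the paper's construction: GCD-free basis of the input labels and $b$, the exponent-vector map as a monoid isomorphism onto its image in $\natNumInf^k$, and gate-by-gate relabeling of $\times$ as $+$ and $/$ as $-$. (The paper wraps the induction in a decomposition $I(g)=T\cup\bigcup_i(S_i\times K_i)$ involving primes outside the basis, but that machinery is only needed for the complemented case; your leaner induction suffices for~(i) because $M\cup\{0\}$ is closed under $\times$, exact division, $\cup$ and $\cap$.)

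Part~(ii) contains a genuine gap. You keep the GCD-free basis and claim that at every gate the restriction of $I(g)$ to representable numbers equals $I(g^*)\cap\natNum^k$. That invariant is false, and the reduction as you describe it gives wrong answers. Concretely, take inputs $4$ and $0$ and query $b=4$; a GCD-free basis of $\{4\}$ is $\{4\}$, so $k=1$ and the representable numbers are $M=\{4^i\mid i\in\natNum\}$, with $2\notin M$. Build $g_3=\overline{0}=\natNum\setminus\{0\}$, $g_4=4/g_3=\{1,2,4\}$, $g_5=\overline{4}$, $g_6=g_4\cap g_5=\{1,2\}$ and $g_7=g_6\times g_6=\{1,2,4\}$, so $4\in I(C)$. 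In the translated circuit one gets $g_4^*=\{(1)\}-\natNum^1=\{(0),(1)\}$, $g_6^*=\{(0)\}$ and $g_7^*=\{(0)\}$, so $\phi(4)=(1)\notin I(C^*)$. The invariant breaks at $g_7$, where two non-representable numbers multiply to a representable one; the underlying problem is that a divisor of a representable number need not be representable with respect to a GCD-free basis, and complement gates let such divisors reach later $\times$- and $/$-gates.

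The paper's proof of~(ii) differs from your sketch in exactly the two places you gloss over. First, it abandons the GCD-free basis and takes $q_1<\dots<q_k$ to be the actual prime factors of the inputs and $b$; computing this prime factorization is the step that costs $\PSPACE$ (the circuit transformation itself remains the same cheap relabeling as in~(i)), not any polynomial-space ``bookkeeping at $\compl$-gates.'' Second, since complements still introduce numbers with foreign prime factors, and there are infinitely many such primes, one cannot simply enlarge the basis; instead the paper appends one extra coordinate $\sum_{j>k}d_j$ aggregating all foreign exponents, so that $\sigma:\natNum\rightarrow\natNumInf^{k+1}$ becomes a non-injective monoid homomorphism, and the technical content of the proof is that $a\in I_C(g)\Leftrightarrow\sigma(a)\in I_{C'}(g)$ survives this loss of injectivity. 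Neither ingredient appears in your proposal, so part~(ii) does not go through as written.
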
polynomial time

\begin{proof} 
(i) Let $\cO$ such that $\emptyset \neq \cO \subseteq \{ \cup, \cap \}$ and $C$ be a $\cO \cup \{ \times, / \}$-circuit with input gates $g_1, .. ., g_n$ with labels $a_1, \dots, a_n \in \natNum$ and let $b \in \natNum$. Compute a GCD-free basis $\{q_1, \dots, q_k \}$ of $\{ a_1, \dots, a_n, b \}$ in polynomial time. Let $M \defEquals \left\{ \prod_{j=1}^k q_j^{d_j} \mid d_1, \dots, d_k \in \natNum \right\}$. Hence $M \subseteq \mathbb {N}^+$, $M \cup \{ 0 \}$ is closed under multiplication and $1 = \prod_{j=1}^k q_j^0 \in M$. Therefore, $(M \cup \{ 0 \}, \times, 1)$ (such that $\times$ is the multiplication of $\natNum$ restricted to $M \cup \{ 0 \}$) is a monoid. Define $\sigma: M \cup \{ 0 \} \rightarrow \natNumInf^k$ such that $\sigma(\prod_{j=1}^k q_j^{d_j}) = (d_1, \dots, d_k)$ and $\sigma(0) = \infty$. The mapping $\sigma$ is a monoid isomorphism. Let $C'$ be the $\cO \cup \{ +, - \}$-circuit, which we construct from $C$ by labeling the input gate $g_i$, for $i \in \{ 1, \dots n \}$, with $\sigma(a_i)$ instead of $a_i$, replacing all $\times$-gates with $+$-gates, and replacing all $/$-gates with $-$-gates. For all $S \subseteq M \cup \{ 0 \}$ let $\sigma(S)$ be the image of $S$ under $\sigma$. For all $S_1, S_2 \subseteq M \cup \{ 0 \}$, the following holds:
{\small \begin{align*}
\sigma(S_1 \times S_2) & = \sigma(S_1) + \sigma(S_2) & \sigma(S_1 / S_2) & = \sigma(S_1) - \sigma(S_2) \\
\sigma(S_1 \cup S_2) & = \sigma(S_1) \cup \sigma(S_2) & \sigma(S_1 \cap S_2) & = \sigma(S_1) \cap \sigma(S_2)
\end{align*} }
With these equations we can show: 
\begin{cclaim} \label{auxStatment1}
For all gates $g$ from $C$ and $a \in \natNum$, $a \in I_C(g) \Leftrightarrow \sigma(a ) \in I_{C'}(g)$. 
\end{cclaim} 
According to Claim 1 for all $a \in M \cup \{ 0 \}$, $a \in I_C(g)$ \ifAndOnlyIf $\sigma(a) \in I_{C'}(g)$.
Since $b \in M \cup \{ 0 \}$ we get $(C,b) \in \MC(\cO \cup \{ \times, / \})$ \ifAndOnlyIf $(C', \sigma(b)) \in \MC^*(\cO \cup \{ +, - \})$.

We will now show Claim 1. First, we define: 
\begin{align*}
M & \defEquals \left\{ \; \prod_{j = 1}^k q_j^{d_j} \mid d_1, \dots, d_k \in \natNum \right\} \\
K_i & \defEquals \left\{  \prod_{j = k+1}^\infty q_j^{d_j} \mid d_{k+1}, d_{k+2}, \dots \in \natNum \text{ and } \sum_{j = k+1}^\infty d_j = i \right\}  \text{ for each } i \in \natNum
\end{align*}
In the following we write the element-wise multiplication of the sets as $\times$ and the Cartesian product as $\otimes$. Hence $M = M \times \{ 1 \} = M \times K_0$ and $\natNum \setminus \{ 0 \} = \bigcup_{i = 0}^\infty M \times K_i$. Note that $\sigma(M \times K_i) = \natNum^k \otimes \{ i \}$ for all $i \in \natNum$. For all $S_1, S_2 \subseteq M$ the following applies: 
\begin{align*}
\sigma(S_1 \times S_2) & = \sigma(S_1) + \sigma(S_2) & 
\sigma(S_1 / S_2) & = \sigma(S_1) - \sigma(S_2) \\
\sigma(S_1 \cup S_2) & = \sigma(S_1) \cup \sigma(S_2) & 
\sigma(M \setminus S_1) & = \sigma(M) \setminus \sigma(S_1)
\end{align*}
For each $i \in \natNum$, Let $e_i \defEquals (0, \dots, 0, i) \in \natNum^{k+1}$. \\
\begin{cclaim} \label{auxStatment2}
For every gate $g$ in $C$, $T \subseteq \{ 0 \}$ and $S_0, S_1, \dots \subseteq M$ exist such that $I_C(g) = T \cup \bigcup_{i \in \natNum} (S_i \times K_i)$ and $I_{C'}(g) = \sigma(T) \cup \bigcup_{i \in \natNum} \left(  \sigma(S_i) + \{ e_i \} \right)$. 
\end{cclaim}
\Cref{auxStatment2} can be proven inductively via the structure of $C$. We will only provide the induction step for the case of a $/$-gate, because the rest is analogous to the proof of \cite[Lem. 3.2 (iii)]{MW07}. Let $g$ be a $/$-gate with first predecessor $g_1$ and second predecessor $g_2$. According to the induction hypothesis, $T_1, T_2 \subseteq \{ 0 \}$, $S^1_0, S^1_1, \dots \subseteq M$ and $S^2_0, S^2_1, \dots \subseteq M$ exist, such that $I_C(g_j) = T_j \cup \bigcup_{i \in \natNum} (S^j_i \times K_i) $ and $I_{C'}(g_j) = \sigma(T_j) \cup \bigcup_{i \in \natNum} \left(  \sigma(S^j_i) + \{ e_i \} \right) $ for each $j \in \{ 1, 2 \}$.
The following applies:
$$
I(g) 
= I(g_1) / \left(T_2 \cup \bigcup_{i \in \natNum} (S^2_i \times K_i)\right) 
= (I(g_1) / T_2) \cup \left(I(g_1) / \bigcup_{i \in \natNum} (S^2_i \times K_i)\right)
$$
Because of $T_2 \subseteq \{ 0 \}$ this shows:
$$
I(g) = \left(I(g_1) / \bigcup_{i \in \natNum} (S^2_i \times K_i)\right)
= \left(T_1 \cup \bigcup_{i \in \natNum} (S^1_i \times K_i)\right) / \bigcup_{i \in \natNum} (S^2_i \times K_i)
$$
Hence $ I(g) = T \cup \left(\left(\bigcup_{i \in \natNum} (S^1_i \times K_i)\right)/\left(\bigcup_{i \in \natNum} (S^2_i \times K_i)\right)\right) $ with \\
$ T \defEquals T_1 / \left(\bigcup_{i \in \natNum} (S^2_i \times K_i)\right) $. \\
\begin{cclaim} \label{auxStatment3}
For each $i \in \natNum$, $I(g) = T \cup \bigcup_{i \in \natNum} (S_i \times K_i)$ with $S_i \defEquals \bigcup_{j = i}^\infty (S^1_j / S^2_{j-i}) $. 
\end{cclaim}
Proof of \Cref{auxStatment3}: Let $x \in (\bigcup_{i \in \mathbb {N}} (S^1_i \times K_i))/(\bigcup_{i \in \mathbb {N}} (S^2_i \times K_i)) $. Then there are $i_1 \in \mathbb {N}$, $s_1 \in S_{i_1}^1$ and $k_1 \in K_{i_1}$, as well as $i_2 \in \mathbb {N}$, $s_2 \in S_{i_2}^2$ and $k_2 \in K_{i_2}$ with $x = (s_1 k_1)/ (s_2 k_2)$. $s_2$ divides $s_1$ and $k_2$ divides $k_1$. In particular, $i_1 \geq i_2$. Hence $x = (s_1/s_2) (k_1/k_2) \in (S^1_{i_1} / S^2_{i_2}) \times K_{i_1 - i_2}$ and $S^1_{i_1} / S^2_{i_2} \subseteq S_{i_1 - i_2} $. Conversely, let $x \in \bigcup_{i \in \natNum} ((\bigcup_{j = i}^\infty (S^1_j / S^2_{j-i})) \times K_i)$. 
Then there are $i \in \natNum$, $j \geq i$, $s_1 \in S_j^1$, $s_2 \in S_{j-i}^2$ and $l \in K_i$ with 
 $ x = (s_1 / s_2) \; l = (s_1 l q_{k+1}^{j-i}) / (s_2 q_{k+1}^{j-i}) \in (S_j^1 \times K_j) / (S_{j-i}^2 \times K_{j-i})$. Thus, \Cref{auxStatment3} holds. 

For $I_{C'}(g)$ we get:
\begin{align*}
I_{C'}(g) 
  = &  I_{C'}(g_1) - \left(\sigma(T_2) \cup \bigcup_{i \in \natNum} (\sigma(S^2_i) + \{ e_i \})\right)  \\
= & I_{C'}(g_1) - \left(\bigcup_{i \in \natNum} (\sigma(S^2_i) + \{ e_i \})\right) \hspace{3cm} \text{ (because $\sigma(T_2) \subseteq \{ \infty \}$)} \\
= & \left(\sigma(T_1) \cup \bigcup_{i \in \natNum} (\sigma(S^1_i) + \{ e_i \})\right) - \left(\bigcup_{i \in \natNum} (\sigma(S^2_i) + \{ e_i \})\right) \\
= & \left(\sigma(T_1) - \left(\bigcup_{i \in \natNum} (\sigma(S^2_i) + \{ e_i \})\right)\right) \\
& \cup \left(\left(\bigcup_{i \in \natNum} (\sigma(S^1_i) + \{ e_i \})\right) - \left(\bigcup_{i \in \natNum} (\sigma(S^2_i) + \{ e_i \})\right)\right) \\
= & \sigma(T) \cup \sigma\left(\left(\bigcup_{i \in \natNum} (S^1_i \times K_i)\right)/\left(\bigcup_{i \in \natNum} (S^2_i \times K_i)\right)\right) \\
= & \sigma(T) \cup \sigma\left(\bigcup_{i \in \natNum} (S_i \times K_i)\right) \\
= & \sigma(T) \cup \bigcup_{i \in \natNum} (\sigma(S_i) + \{ e_i \} )
\end{align*}
This shows \Cref{auxStatment2}. We use \Cref{auxStatment2} to prove \Cref{auxStatment1}.

(ii) Now $\cO$ may include $\compl$. We can assume that $\cap \notin \cO$, because of De Morgan's laws. The $\compl$-gates can cause the circuit to produce natural numbers that can no longer be represented with respect to the GCD-free basis $q_1, \dots q_k$. We want to use the construction from (i) with the prime factorization (instead of the decomposition with respect to a GCD-free basis). In order to represent all the numbers that might be calculated in the gates of the circuit as a finite vector, we combine all powers of prime factors that are not prime factors in $a_1, \dots, a_n$ and $b$ into a single number.
Let $k \in \natNum$ such that $q_1 < \dots < q_k $ are the (ascendingly ordered) prime factors of the numbers $a_1, \dots, a_n$ and $b$. Let $q_{k+1}, q_{k+2}, \dots$ be all the other prime numbers (ascendingly ordered). We define $\sigma: \natNum \rightarrow \natNumInf^{k+1}$ such that $\sigma(0) = \infty$ and $\sigma(\prod_{j=1}^\infty q_j^{d_j}) = \left(  d_1, \dots, d_k, \sum_{j = k+1}^\infty d_j \right)$. For the reduction, we convert $C$ into a circuit $C'$ over $\natNumInf^k$ analogously to (i). The decomposition of $a_1, \dots, a_n$ and $b$ into prime factors is possible in $\PSPACE$. Therefore, $C'$ can be constructed in $\PSPACE$. Note that $\sigma$ is a monoid homomorphism between $(\natNum, \times, 1)$ and $(\natNumInf^{k+1}, +, (0, \dots 0))$. However, in contrast to the construction from (i), $\sigma$ is not injective. One can show (in a very technical) proof that for gates $g$ from $C$ and $a \in \natNum$, $a \in I_C(g) \Leftrightarrow \sigma(a ) \in I_{C'}(g)$ holds anyway. \qed 
\end{proof}

\noindent Analogously to \Cref{Satz3} (i) one shows:
\begin{theorem} \label{Satz10}
$\MC^*(\cap,  +, - ) \in \mathrm{P}$.
\end{theorem}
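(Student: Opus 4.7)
The plan is to mirror the proof of Theorem~\ref{Satz3}(i), transferred to the monoid $(\natNumInf^k, +, (0,\dots,0))$. Since the circuit has no union gates, an easy structural induction shows that for every gate $g$, the result set $I(g)$ has cardinality at most one: input gates store a single element of $\natNumInf^k$; intersection cannot increase cardinality; and addition or subtraction applied to two sets of cardinality at most one yields either a singleton (when $-$ is defined coordinate-wise and the $\infty$-cases are respected) or the empty set. Hence at each gate it suffices to store either a single element or a flag marking $\emptyset$.

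Next I would bound the bit-length of the stored vectors. The binary encoding of each input coordinate is part of the encoding of $C$, so every input component is at most $2^{|C|}$, and the dimension $k$ is also bounded by $|C|$ (since each input vector uses at least $k$ bits). Intersection and subtraction never enlarge a coordinate, while a $+$-gate at most doubles the maximum component relative to its predecessors. Since the depth of the circuit is at most $|C|$, every coordinate occurring at any gate is bounded by $2^{|C|} \cdot 2^{|C|} = 2^{2|C|}$; each stored vector therefore fits in $O(k \cdot |C|) = O(|C|^2)$ bits.

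The algorithm is then the straightforward bottom-up evaluation: process the gates in reverse topological order, computing for each $g$ either the unique element of $I(g)$ or declaring $I(g) = \emptyset$. For input gates we take $I(g) = \{\alpha(g)\}$; for an intersection we compare the stored elements; for $+$ and $-$ we add or subtract component-wise, using the rules $\infty + v = \infty$, $\infty - v = \infty$, while treating $v - \infty$, $\infty - \infty$, and any coordinate-wise underflow $v - w$ as producing $\emptyset$. Each gate is processed in time polynomial in $|C|$, and there are at most $|C|$ gates, yielding an overall polynomial running time; finally accept $(C,b)$ iff $I(g_C) = \{b\}$.

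The only point requiring care, and thus what I would view as the main obstacle, is the case analysis for the partial operation $-$ together with the distinguished element $\infty$: one has to track when a gate becomes empty and propagate that through every subsequent operation. None of this introduces any real difficulty---unlike the situation with $\cup$ handled in Theorem~\ref{Satz6}, no branching into multiple candidate elements ever arises, so the deterministic polynomial-time evaluation goes through exactly as in Theorem~\ref{Satz3}(i).
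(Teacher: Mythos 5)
Your proposal is correct and follows essentially the same route as the paper, which proves this theorem simply by declaring it analogous to Theorem~\ref{Satz3}(i): singleton result sets at every gate, polynomially bounded encodings of the stored vectors, and deterministic bottom-up evaluation with careful handling of $\infty$ and undefined subtractions. You have in fact supplied more detail than the paper does.
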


\begin{corollary} \label{Folg4}
$\MC(\cap,  \times, / ), \MC(\times, / ) \in \mathrm{P}$.
\end{corollary}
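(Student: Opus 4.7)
The plan is to derive both membership statements directly from \Cref{Satz9}~(i) and \Cref{Satz10}, treating them as two small cases and letting the heavy lifting happen in the preceding results.

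For $\MC(\cap,\times,/)$ I would instantiate \Cref{Satz9}~(i) with $\cO\defEquals\{\cap\}$. The hypothesis $\emptyset\neq\cO\subseteq\{\cup,\cap\}$ is satisfied, so the theorem provides a polynomial-time many-one reduction $\MC(\cap,\times,/)\pReduction \MC^{*}(\cap,+,-)$. By \Cref{Satz10} the target problem lies in $\mathrm{P}$. Since $\mathrm{P}$ is closed under $\pReduction$, it follows that $\MC(\cap,\times,/)\in\mathrm{P}$.

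For $\MC(\times,/)$ the hypothesis of \Cref{Satz9}~(i) would require $\cO=\emptyset$, which is not allowed. I avoid this by invoking the early \textbf{Remark} that for $\emptyset\neq\cO_1\subseteq\cO_2$ one has $\MC(\cO_1)\logReduction\MC(\cO_2)$; applied to $\cO_1=\{\times,/\}\subseteq\{\cap,\times,/\}=\cO_2$ this gives $\MC(\times,/)\logReduction \MC(\cap,\times,/)$. Combining this log-space reduction with the already-established membership $\MC(\cap,\times,/)\in\mathrm{P}$ and the closure of $\mathrm{P}$ under $\logReduction$ (hence under $\pReduction$), we obtain $\MC(\times,/)\in\mathrm{P}$.

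There is no real obstacle here beyond bookkeeping: the content is entirely in \Cref{Satz9}~(i) (GCD-free basis and the monoid isomorphism $\sigma$) and \Cref{Satz10} (polynomial-time algorithm for $\MC^{*}(\cap,+,-)$). The only subtle point worth mentioning explicitly is the handling of the purely-$/{\times}$ case via the subset reduction, since \Cref{Satz9}~(i) is stated only for nonempty $\cO\subseteq\{\cup,\cap\}$.
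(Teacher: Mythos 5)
Your derivation is correct and matches the paper's intended route: \Cref{Folg4} is obtained exactly by composing the reduction of \Cref{Satz9}~(i) (with $\cO=\{\cap\}$) with \Cref{Satz10}, and handling $\MC(\times,/)$ via the subset remark is a clean way to sidestep the nonemptiness hypothesis on $\cO$. No gaps.
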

\noindent We limit the size of the elements for each node by generalizing \Cref{Lemma1}:
\begin{makethm}{lemma}{Lemma2}
Let $C = (V, E, g_C, \alpha)$ be a $\cO$-circuit with $\emptyset \neq \cO \subseteq \{ \cup, \cap , \compl, + , - \}$ over $\natNumInf^m $ with $m \geq 1$.
Let $n_C \defEquals (2^{|C|} + 1, \dots, 2^{|C|} + 1) \in \mathbb {N}^m$.
For each $g \in V$ and for each $z = (z_1, \dots, z_m) \in \mathbb {N}^m$ such that an index $1 \leq i \leq m$ exists with $z_i \geq 2^{|C|} + 1$, 
$z \in I(g)$ \ifAndOnlyIf $n_C \in I(g) $.
\end{makethm}

\begin{proof}
Let $C_g$ be the circuit formed from $C$ using $g \in V$ as the output gate and removing all gates from which $g$ cannot be reached. Let $n_g \defEquals (2^{|C_{g}|} + 1, \dots, 2^{|C_{g}|} + 1) \in \mathbb {N}^m$. \\
We show the following \textbf{auxiliary statement}: For each $g \in V$ and for each $z = (z_1, \dots, z_m) \in \mathbb {N}^m$ such that an index $1 \leq i \leq m$ exists with $z_i \geq 2^{|C_g|} + 1$, the following applies: 
$$ z \in I(g) \; \Leftrightarrow \; n_g \in I(g)$$
We show the auxiliary statement by induction on the structure of $C$. \\
\underline{Base case:} Let $g$ be an input gate and $\alpha(g) = (y_1, \dots, y_m)$. This means $I(g) = \{ (y_1, \dots, y_m) \}$. Since the binary representation of $y_i$ must be included in the encoding of $C_g$, $\log(y_i) + 1 \leq |C_g|$ and therefore $y_i \leq 2^{\log(y_i)+1} < 2^{|C_g|} + 1 $ for all $1 \leq i \leq m$. Hence, for each $z = (z_1, \dots, z_m) \in \mathbb {N}^m$ such that an index $1 \leq i \leq m$ exists with $z_i \geq 2^{|C_g|} + 1$, $z \notin I(g)$. In particular, $n_g \notin I(g)$. \\
\underline{Induction step:} Let $g$ be a gate of $C$ with predecessors such that the auxiliary statement holds for $p(g)$ or, respectively, $p_1(g)$ and $p_2(g)$ (induction hypothesis). Let $z = (z_1, \dots, z_m) \in \mathbb {N}^m$ and $z_i \geq 2^{|C_g|} + 1$ for an index i, such that $1 \leq i \leq m$. \\
\textit{Case 1:} Let $\alpha(g) = \cup$. It is $z_i \geq 2^{|C_g|} + 1 \geq 2^{|C_{p_j(g)}|} + 1$ for all $j \in \{1,2 \} $. It follows:
\begin{align*} 
z \in I(g) & \Leftrightarrow z \in I(p_1(g)) \vee z \in I(p_2(g)) \\
& \Leftrightarrow n_{p_1(g)} \in I(p_1(g)) \vee n_{p_2(g)} \in I(p_2(g)) \text{ (according to the IH)} \\
& \Leftrightarrow n_g \in I(p_1(g)) \vee n_g \in I(p_2(g)) \text{ (according to the IH)} \\
& \Leftrightarrow n_g \in I(g) \\
\end{align*}
\textit{Case 2:} Let $\alpha(g) = \cap$. We show $(z \in I(g) \Leftrightarrow n_g \in I(g))$ analogous to Case 1.
\\
\textit{Case 3:} Let $\alpha(g) = \compl$. We show $(z \in I(g) \Leftrightarrow n_g \in I(g))$ similar to Case 1.
\\
\textit{Case 4:} Let $\alpha(g) = +$. The following statement (1) holds:
$$ 2^{|C_g|} > 2^{|C_{p_1(g)}|} + 2^{|C_{p_2(g)}|} $$
\textit{Case 4 a:} Let $n_{p_1(g)} \notin I(p_1(g))$ and $n_{p_2(g)} \notin I(p_2(g))$. According to the induction hypothesis, $I(p_1(g)) \subseteq [0, 2^{|C_{p_1(g)}|}]^m$ and $I(p_2(g)) \subseteq [0, 2^ {|C_{p_2(g)}|}]^m$. Because of (1) we get:
$$I(g) = I(p_1(g)) + I(p_2(g)) \subseteq [0, 2^{|C_{p_1(g)}|} + 2^{|C_{p_2(g)}|}]^m \subset [0, 2^{|C_{g}|}]^m$$
Hence $z \notin I(g)$ and $n_g \notin I(g)$.
\\
\textit{Case 4 b:} Let $n_{p_1(g)} \in I(p_1(g))$ and $n_{p_2(g)} \notin I(p_2(g))$.
For each $y = (y_1, \dots , y_m) \in I(p_2(g))$, $y_i < 2^{|C_{p_2(g)}|} + 1$ and therefore we get (2):
$$ z_i - y_i \geq (2^{|C_g|} + 1) - y_i > (2^{|C_g|} + 1) - (2^{|C_{p_2(g)}|} + 1) \geq 2^{|C_{p_1(g)}|} + 1$$
Therefore we get:
\begin{align*}
& z \in I(g) \\
& \Leftrightarrow \exists y \in \natNum^m: (z-y) \in I(p_1(g)) \wedge y \in I(p_2(g)) \\
& \Leftrightarrow \exists y \in \natNum^m: n_{p_1(g)} \in I(p_1(g)) \wedge y \in I(p_2(g)) \text{ (according to (2) and the IH)}\\
& \Leftrightarrow \exists y \in \natNum^m: (n_g - y) \in I(p_1(g)) \wedge y \in I(p_2(g)) \text{ (according to (2) and the IH)}\\
& \Leftrightarrow n_g \in I(g)
\end{align*}
\\
\textit{Case 4 c:} Let $n_{p_1(g)} \notin I(p_1(g))$ and $n_{p_2(g)} \in I(p_2(g))$. We show $(z \in I(g) \Leftrightarrow n_g \in I(g))$ analogously to Case 4 b. 
\\
\textit{Case 4 d:} Let $n_{p_1(g)} \in I(p_1(g))$ and $n_{p_2(g)} \in I(p_2(g))$. According to (1), the following statement (3) applies:
$$z_i-( 2^{|C_{p_2(g)}|} + 1) \geq (2^{|C_g|} +1)- (2^{|C_{p_2(g)}|}+1) \geq 2^{|C_{p_1(g)}|}+1 $$
Define
\begin{align*}
a \defEquals & ( z_1 &,& \dots &,& z_{i-1} &,& z_i-( 2^{|C_{p_2(g)}|} + 1) &,& z_{i+1} &,& \dots &,& z_m &) \\
b \defEquals & (0   &,& \dots &,& 0       &,& 2^{|C_{p_2(g)}|} + 1      &,& 0       &,& \dots &,& 0   &)  
\end{align*}
Hence $a \in I(p_1(g))$ according to (3) and the induction hypotheses, $b \in I(p_2(g))$ according to the hypotheses and thus $a+b = z \in I(g)$.
Define:
\begin{align*}
c \defEquals & \left( \left( (2^{|C_g|} +1)- (2^{|C_{p_2(g)}|}+1)\right) , \dots, \left( (2^{|C_g|} +1)- (2^{|C_{p_2(g)}|}+1)\right) \right)  \in \natNum^m\\
d \defEquals & (2^{|C_{p_2(g)}|} +1, \dots, 2^{|C_{p_2(g)}|}+1) \in \natNum^m
\end{align*}
Hence $c \in I(p_1(g))$ according to (3) and the induction hypotheses, $d \in I(p_2(g))$ according to the induction hypotheses and thus $c+d = n_g \in I(g)$ .
\\
\textit{Case 5:} Let $\alpha(g) = -$.
For all $y = (y_1, \dots, y_m) \in \natNum^m$ the following statement (4) applies:
$$ z_i + y_i \geq ( 2^{|C_{g}|} + 1) +  y_i \geq ( 2^{|C_{g}|} + 1) \geq 2^{|C_{p_1(g)}|} + 1$$
\begin{align*}
& z \in I(g) \\
& \Leftrightarrow \exists y \in \natNum^m: z+y \in I(p_1(g)) \wedge y \in I(p_2(g)) \\
& \Leftrightarrow \exists y \in \natNum^m: n_{p_1(g)} \in I(p_1(g)) \wedge y \in I(p_2(g)) \text{ (according to (4) and the IH)}\\
& \Leftrightarrow \exists y \in \natNum^m: (n_g + y) \in I(p_1(g)) \wedge y \in I(p_2(g)) \text{ (according to (4) and the IH)}\\
& \Leftrightarrow n_g \in I(g)
\end{align*}
This shows that the auxiliary statement holds for $g$. 

The lemma follows from the auxiliary statement because $|C_g| \leq |C|$ for all $g \in V$.
\end{proof}

\noindent We solve $\MC^*(\cup, \cap, \compl, +, - )$ on an alternating \TM in polynomial time. The approach is similar to the algorithm in \Cref{Algo1}. The modified algorithm can be found in \Cref{Algo2}. 

\begin{figure}
\algrenewcommand\algorithmicrequire{\textbf{Input:}}
\algrenewcommand{\algorithmiccomment}[1]{\State {\color{blue}// #1 }}
\begin{algorithmic}[1]
   \Require $(C,b, m)$
   \State Let $n = 2^{|C|} + 1$, $g = g_C$, $x = b$ and $t = 1$.
   \While{$\alpha(g) \notin \mathbb{N}^m \cup \{ \infty \}$}
	  \If{$x \neq \infty$}      
      	\State Let $(x_1, \dots, x_m) = x$.   
      	\If{$\exists i \in \{1, \dots, m \}: x_i \geq n $}
   			\State Let $x = (n, \dots, n) \in \mathbb{N}^m$. \label{L2_Komplement}
      	\EndIf
      \EndIf      
      \Comment{If $t==1$ test $x \in I(g)$ and if $t==0$ test $x \notin I(g)$.}      
      \If{$\alpha(g) == \cup $} \Comment{Union}
        \If{$\alpha(g) == \cup $ and $t == 0$}
      	  \State Set $g = p_1(g)$ and $g = p_2(g)$ in parallel using a universal state.
        \ElsIf{$\alpha(g) == \cup $ and $t == 1$}
      	  \State Guess $i \in \{ 1, 2 \}$.  
          \State Let $g = p_i(g)$.
        \EndIf
      \ElsIf{$\alpha(g) == \overline{\phantom{I}} $} \Comment{Complement}
        \State Let $g = p(g)$ and $t = 1-t$.      
      \ElsIf{$\alpha(g) == + $} \Comment{Addition}
        \If{$t == 1$}  
          \Comment{Guess $a \in [0,n]^m \cup \{ \infty \}$.}
        		\State Guess $j \in \{0,1\}$. \label{L2_Addition1}
        		\State Let $a =  \infty$.
        		\If{$j == 1$} 
        		  \ForAll{$i \in \{1, \dots, m \}$}
		        \State Guess $a_i \in [0, n] \cap \mathbb{N}$.         
              \EndFor 
              \State Let $a = (a_1, \dots, a_m)$.
        		\EndIf
        		\Comment{Guess $b \in [0,n]^m \cup \{ \infty \}$.}
        		\State Guess $k \in \{0,1\}$.
        		\State Let $b =  \infty$.
        		\If{$k == 1$}
        		  \ForAll{$i \in \{1, \dots, m \}$}
		        \State Guess $b_i \in [0, n] \cap \mathbb{N}$.         
              \EndFor 
              \State Let $b = (b_1, \dots, b_m)$.
        		\EndIf
        		\Comment{Ensure $a + b == x$.}
        		\If{$a + b \neq x$}
        			\State Let $a =  x$ and $b = (0, \dots, 0) \in \mathbb{N}^m$.  
        		\EndIf
        		\State Set $(x,g) = (a, p_1(g))$ and  $(x,g) = (b, p_2(g))$ in parallel using a universal state.

    \algstore{alog2break1}
\end{algorithmic}
\end{figure}

\begin{figure}
\algrenewcommand{\algorithmiccomment}[1]{\State {\color{blue}// #1 }}
\begin{algorithmic}[1]
    \algrestore{alog2break1}
    \Else
    \Comment{$t == 0$, Set $a$ to every value of $[0,n]^m \cup \{ \infty \}$ in parallel.}
        		\State Set $j$ to both values of $\{0,1\}$ in parallel using a universal state. \label{L2_Addition0}
        		\State Let $a =  \infty$.
        		\If{$j == 1$} 
        		  \ForAll{$i \in \{1, \dots, m \}$}
		        \State Set $a_i$ to every value of $[0, n] \cap \mathbb{N}$ in parallel using universal states.         
              \EndFor 
              \State Let $a = (a_1, \dots, a_m)$.
        		\EndIf
        		\Comment{Set $b$ to every value of $[0,n]^m \cup \{ \infty \}$ in parallel.}
        		\State Set $k$ to both values of $\{0,1\}$ in parallel using a universal state.
        		\State Let $b =  \infty$.
        		\If{$k == 1$}
        		  \ForAll{$i \in \{1, \dots, m \}$}
		        \State Set $b_i$ to every value of $[0, n] \cap \mathbb{N}$ in parallel using a universal state.         
              \EndFor 
              \State Let $b = (b_1, \dots, b_m)$.
        		\EndIf
        		\Comment{Ensure $a + b == x$.}
        		\If{$a + b \neq x$}
        			\State Let $a =  x$ and $b = (0, \dots, 0) \in \mathbb{N}^m$.  
        		\EndIf
        		\State Guess $i \in \{1,2\}$.
        		\If{$k == 1$}
        			\State $(x,g) = (a, p_1(g))$.
        		\Else 
        			\State $(x,g) = (b, p_2(g))$.
        		\EndIf         
        	  \EndIf
      \Else \Comment{Subtraction}
        \If{$t == 1$}  
      	  \ForAll{$i \in \{1, \dots, m \}$} \label{L2_Subtraktion1}
		    \State Guess a binary encoded $a_i \in [0, n] \cap \mathbb{N}$.         
          \EndFor
          \State Let $a = (a_1, \dots, a_m)$.
          \State Set $(x,g) = (x + a, p_1(g))$ and  $(x,g) = (a, p_2(g))$ in parallel using an universal state. 
          \Comment{It is $x + a == \infty$, if $x == \infty$}        
        \Else  
          \Comment{$t == 0$} 
          \ForAll{$i \in \{1, \dots, m \}$} \label{L2_Subtraktion0}
		    \State Set $a_i$ to every value of $[0, n] \cap \mathbb{N}$ in parallel using universal states.         
        	  \EndFor 
          \State Let $a = (a_1, \dots, a_m)$.
          \State Guess $i \in \{ 1,2 \}$.
          \If{$i == 1$}
            \State Let $(x,g) = (a + x, p_1(g))$ 
            
    \algstore{alog2break2}
\end{algorithmic}
\end{figure}

\begin{figure}
\algrenewcommand{\algorithmiccomment}[1]{\State {\color{blue}// #1 }}
\begin{algorithmic}[1]
    \algrestore{alog2break2}
    \Comment{It is $x + a == \infty$, if $x == \infty$}      		
          \Else
            \State Let $(x,g) = (a, p_2(g))$.
          \EndIf
        
        \EndIf
      \EndIf
   \EndWhile
   \Comment{g is an input gate.}
    \If{$t == 1$}
      \State Accept, if $x == \alpha(g)$. Otherwise, reject.
   \Else
      \State Reject, if $x == \alpha(g)$. Otherwise, accept.
   \EndIf
\end{algorithmic}
\caption{Algorithm for circuits with union, complement, addition and subtraction over $\mathbb{N}^m \cup \{ \infty \}$}
\label{Algo2}
\end{figure}

\begin{theorem} \label{Satz11}
$\MC^*(\cap, \cup, \compl, +, - ) \in \PSPACE$.
\end{theorem}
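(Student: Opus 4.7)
The plan is to follow the blueprint of \Cref{Satz1}: analyze \Cref{Algo2} as an alternating polynomial-time algorithm and conclude via $\ATIME(n^{O(1)}) \subseteq \PSPACE$. By De Morgan's law (which is valid in any ambient universe, including $\natNumInf^m$), it suffices to handle $\cO = \{\cup, \compl, +, -\}$, and \Cref{Algo2} is written for exactly these operations.

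The first step is to exploit \Cref{Lemma2}: whenever the current vector $x$ has some coordinate that reaches $n = 2^{|C|}+1$, the algorithm replaces $x$ by $(n, \dots, n)$ in Line~\ref{L2_Komplement} without changing membership in $I(g)$. This keeps every tracked vector inside $[0,n]^m \cup \{\infty\}$, storable in $O(m \cdot |C|)$ bits, so all binary guesses in Lines~\ref{L2_Addition1}, \ref{L2_Addition0}, \ref{L2_Subtraktion1}, \ref{L2_Subtraktion0} are polynomial in the input size.

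Correctness is then proved by structural induction on $C$, mirroring the proof of \Cref{Satz1}. For every gate $g'$, every $x' \in \natNumInf^m$ and every $t' \in \{0,1\}$ one introduces the subtree $\beta(g',x',t')$ of the computation tree of \Cref{Algo2} together with the predicate $A(g',x',t') = 1$ iff ($t'=1 \wedge x' \in I(g')$) or ($t'=0 \wedge x' \notin I(g')$). The inductive claim is $\beta(g',x',t') \text{ accepts} \Leftrightarrow A(g',x',t') = 1$. The base case handles input gates directly. The induction step splits according to $\alpha(g') \in \{\cup, \compl, +, -\}$ and $t' \in \{0,1\}$: each case reduces to verifying that the existential/universal branching in \Cref{Algo2} mirrors the corresponding set-theoretic operation, with \Cref{Lemma2} invoked to justify restricting quantified summands (for $+$) and subtrahends (for $-$) to the truncated domain $[0,n]^m \cup \{\infty\}$.

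Finally, since $C$ is acyclic and $g$ strictly advances to a predecessor in every iteration of the main loop, the loop runs $O(|C|)$ times, and one iteration takes time polynomial in $|C|$ and $m$. Thus \Cref{Algo2} is an alternating polynomial-time algorithm, and $\ATIME(n^{O(1)}) \subseteq \PSPACE$ yields the claim. The main obstacle is the $t' = 0$ case for $+$ and $-$: one must check that a universal branching over the bounded cube $[0,n]^m \cup \{\infty\}$ still correctly certifies non-membership, i.e. that no witness with a coordinate exceeding $n$ can escape the quantification. This is precisely where the full strength of \Cref{Lemma2} is needed, since any such oversized witness behaves exactly like $n_C$ with respect to membership in the predecessor gates, so truncating the universal quantifier loses no generality.
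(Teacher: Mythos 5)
Your proposal is correct and follows essentially the same route the paper intends: the paper itself states only that the argument is analogous to \Cref{Satz1}, using \Cref{Algo2} together with the coordinate-wise truncation bound from \Cref{Lemma2}, and your sketch fills in exactly that argument (De Morgan reduction to $\{\cup,\compl,+,-\}$, the $\beta$/$A$ structural induction, truncation of quantified summands and subtrahends to $[0,n]^m \cup \{\infty\}$, and $\ATIME(\mathrm{poly}) \subseteq \PSPACE$). The only minor imprecision is that a subtrahend of a $-$-gate can never be $\infty$ (since $w - \infty$ is undefined), so the quantification there ranges over $[0,n]^m$ only, but this does not affect the validity of the argument.
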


\begin{corollary} \label{Folg5} \begin{enumerate}
\item[(i)] $\MC(\{ \cup, \cap , \compl, \times, / \}) \in \PSPACE$.
\item[(ii)] For each $\cO \in \left\lbrace \{ \cup, \cap , \compl, \times \}, \{ \compl, \times \}, \{ \cup, \cap , \times \} \right\rbrace $ the following applies: \\
(a) $\MC(\cO \cup \{ / \})$ is $\PSPACE$-complete. (b) $\MC(\cO)\logEquiv \MC(\cO \cup \{ / \})$.
\end{enumerate}
\end{corollary}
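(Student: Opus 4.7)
The plan is to derive all three statements by chaining Theorem~\ref{Satz11} with Theorem~\ref{Satz9}~(ii) and then combining with the known $\PSPACE$-hardness results for the division-free counterparts; no new combinatorial work is required.

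For part (i), I would apply Theorem~\ref{Satz9}~(ii) with $\cO = \{\cup, \cap\}$ to obtain the polynomial-space reduction $\MC(\{\cup, \cap, \compl, \times, /\}) \pspaceReduction \MC^*(\{\cup, \cap, \compl, +, -\})$, and then invoke Theorem~\ref{Satz11}, which places the right-hand side in $\PSPACE$. Since $\PSPACE$ is closed under $\pspaceReduction$, part (i) follows immediately.

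For part (ii)(a), the upper bound is an easy consequence of (i): for each of the three sets $\cO$, the inclusion $\cO \cup \{/\} \subseteq \{\cup, \cap, \compl, \times, /\}$ together with the Remark following the definition of $\MC$ gives $\MC(\cO \cup \{/\}) \logReduction \MC(\{\cup, \cap, \compl, \times, /\}) \in \PSPACE$. For the matching $\PSPACE$-hardness I would cite the corresponding entries of \Cref{Tabelle}: \cite[Th.~5.5]{MW07} yields $\PSPACE$-hardness of $\MC(\cO)$ for $\cO \in \{\{\cup, \cap, \compl, \times\},\{\cup, \cap, \times\}\}$, and \cite[Lem.~26]{TR17:2} handles $\{\compl, \times\}$; composing each with the trivial reduction $\MC(\cO) \logReduction \MC(\cO \cup \{/\})$ transfers the hardness to the problem with division.

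Part (ii)(b) is then a formal consequence of (ii)(a). Both $\MC(\cO)$ and $\MC(\cO \cup \{/\})$ are $\logReduction$-complete for $\PSPACE$ (membership of $\MC(\cO)$ in $\PSPACE$ follows, where needed, from the Remark together with $\MC(\{\cup, \cap, \compl, \times\}) \in \PSPACE$ per MW07), and two problems that are both $\logReduction$-complete for the same class are $\logReduction$-equivalent. I do not anticipate a real obstacle here; the only point to watch is that the reduction from Theorem~\ref{Satz9}~(ii) is merely $\pspaceReduction$ (it invokes prime factorization), so it must only be used in the proof of (i), which asks for membership in $\PSPACE$; the $\logReduction$-equivalences in (ii)(b) are obtained without it, by composing the cited hardness reductions with the trivial reduction from the Remark.
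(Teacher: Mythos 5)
Your proposal is correct and follows essentially the same route as the paper: part (i) by composing Theorem~\ref{Satz9}~(ii) with Theorem~\ref{Satz11}, part (ii)(a) by combining the upper bound from (i) with the cited hardness results of McKenzie--Wagner and Barth et al.\ transferred through the trivial reduction, and part (ii)(b) from mutual $\PSPACE$-completeness. Your closing remark that the $\pspaceReduction$ from Theorem~\ref{Satz9}~(ii) is only needed for the membership statement, not for the $\logReduction$-equivalences, is a correct and worthwhile clarification that the paper leaves implicit.
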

\begin{proof}
(i) We conclude the statement from Theorems \ref{Satz9} (ii) and \ref{Satz11}. \\
(i) For $\cO = \{\compl, \times \}$, Barth et al. showed that $MC(\cO)$ is $\PSPACE$-hard \cite[Lem. 26]{TR17:2} and thus $\PSPACE$-complete. In the other cases, the $\PSPACE$-completeness was already known from McKenzie and Wagner~\cite[Thm. 5.5]{MW07}. Hence $MC(\cO \cup \{ / \})$ is $\PSPACE$-hard in each case. With this, the $\PSPACE$-completeness of $MC(\cO \cup \{ / \})$ follows from (i). Since both $\MC(\cO)$ and $\MC(\cO \cup \{ / \})$ are $\PSPACE$-complete, $\MC(\cO)\logEquiv \MC(\cO \cup \{ / \})$.
\end{proof}

\section{Lower Bounds} \label{Kapitel2}
By a lower bound we mean a complexity class $\mathcal{C}$ such that the considered membership problem is $\mathcal{C}$-hard.

\subsection{Circuits with addition, multiplication and division}
We will show that $\MC(+, \times, /)$ is $\PIT$-hard. 
\begin{definition}
1. $\mathrm{PIT}$ (Polynomial Identity Testing) is the following problem:
We consider the set of $\{ +, \times \}$-circuits $C$ over integers with variables $x_1, \dots, x_n$ as as inputs. We associate the circuit $C$ with the polynomial, that has the variables $x_1, \dots, x_n$. $C$ is in $\mathrm{PIT}$ \ifAndOnlyIf $I(C(x_1, \dots, x_n)) = \{ 0 \}$ for all assignments $x_1, \dots, x_n \in \mathbb {Z}$ is. \\
2. $\PIT$ is the class of problems that are $\logReduction$-reducible to $\mathrm{PIT}$.
\end{definition}
We get the following corollary from \cite[Thm. 45]{TR17:2}:
\begin{corollary} \label{Folg3}
$\MC(\cap,+, \times, /)$ is $\PIT$-hard. 
\end{corollary}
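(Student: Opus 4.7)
The plan is to derive \Cref{Folg3} as a direct corollary by combining the cited hardness result with the monotonicity remark stated after the definition of $\MC(\cO)$.

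First, I would unpack the cited result: Thm.~45 of \cite{TR17:2} is a $\PIT$-hardness statement for the membership problem of a circuit class whose operations lie in $\{\cap, +, \times\}$ (the standard setting in which $\mathrm{PIT}$ is most naturally encoded, since a polynomial-identity instance is evaluated by a $\{+,\times\}$-circuit and singleton-test behaviour is enforced with $\cap$). Concretely, what we need from \cite{TR17:2} is that $\mathrm{PIT}\logReduction \MC(\cO')$ for some $\cO'\subseteq\{\cap,+,\times\}$.

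Second, I would apply the remark immediately after the definition of $\MC(\cO)$: for any nonempty $\cO_1\subseteq \cO_2\subseteq\{\cup,\cap,\compl,+,\times,/\}$ one has $\MC(\cO_1)\logReduction \MC(\cO_2)$. Setting $\cO_1=\cO'$ and $\cO_2=\{\cap,+,\times,/\}$, this gives $\MC(\cO')\logReduction \MC(\cap,+,\times,/)$. Composing with the log-space reduction from \cite[Thm.~45]{TR17:2} and using the fact that $\mathrm{FL}$ is closed under composition and that $\logReduction$ is transitive, we obtain $\mathrm{PIT}\logReduction \MC(\cap,+,\times,/)$.

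Third, since $\PIT$ is by definition the class of problems $\logReduction$-reducible to $\mathrm{PIT}$, transitivity of $\logReduction$ immediately yields that every problem in $\PIT$ reduces to $\MC(\cap,+,\times,/)$ in \logSpace, i.e.\ $\MC(\cap,+,\times,/)$ is $\PIT$-hard.

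The only real obstacle is a bookkeeping one: verifying that the operation set used in Thm.~45 of \cite{TR17:2} indeed sits inside $\{\cap,+,\times,/\}$, so that the monotonicity remark applies verbatim without any circuit re-engineering. If the cited theorem is phrased for $\MC(\cap,+,\times)$ (or a subset thereof), no further work is needed; otherwise a trivial syntactic simulation, replacing any missing operation by an equivalent gadget available in $\{\cap,+,\times,/\}$, would suffice.
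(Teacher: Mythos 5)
Your proposal is correct and is essentially the paper's own argument: the paper derives \Cref{Folg3} directly from \cite[Thm.~45]{TR17:2} with no further proof, the implicit reasoning being exactly your combination of the cited hardness result for a sub-signature of $\{\cap,+,\times\}$ with the monotonicity remark $\MC(\cO_1) \logReduction \MC(\cO_2)$ and transitivity of $\logReduction$. The only point you flag as needing verification --- that the cited theorem concerns the membership problem over operations contained in $\{\cap,+,\times,/\}$ --- is indeed the one thing to check against \cite{TR17:2}, and it is how the paper uses that reference.
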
 

\noindent We reduce $\MC(\cap,+, \times, /)$ to $\MC(+, \times, /)$ by simulating $\cap$-gates with multiplication and division:
\begin{lemma} \label{Lemma10}
$\MC( \cap, \times, +, /) \logEquiv \MC( \times, +, / )$.
\end{lemma}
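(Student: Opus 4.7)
My plan for Lemma 10 is as follows. The direction $\MC(\times, +, /) \logReduction \MC(\cap, \times, +, /)$ is immediate from the earlier remark, since $\{\times, +, /\} \subseteq \{\cap, \times, +, /\}$. The content lies in the other direction, $\MC(\cap, +, \times, /) \logReduction \MC(+, \times, /)$, which I would prove by replacing every $\cap$-gate of the input circuit with a small gadget that uses only $+, \times, /$.

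The crucial structural fact I would exploit is that in a $\{\cap, +, \times, /\}$-circuit every gate computes a set with at most one element. This follows by a direct induction on the structure of the circuit: input gates are singletons, and the operations $+$, $\times$, $/$, $\cap$ applied to singleton-or-empty arguments again yield singleton or empty sets. Consequently, every $\cap$-gate is nothing but an equality test of its two (at most singleton) inputs. The main step is then to establish the identity
\[
A \cap B \;=\; \bigl((A+\{1\})/(B+\{1\})\bigr) \cdot \bigl((B+\{1\})/(A+\{1\})\bigr) \cdot A
\]
for all $A, B \subseteq \natNum$ with $|A|, |B| \leq 1$. The verification is a short case analysis: if $A$ or $B$ is empty, both sides are empty; if $A = B = \{a\}$, then $A+\{1\} = B+\{1\} = \{a+1\}$ is a nonzero singleton so both divisions yield $\{1\}$ and the right-hand side is $\{a\}$; and if $A = \{a\}$, $B = \{b\}$ with $a \neq b$, then WLOG $a+1 > b+1 > 0$, so $(B+\{1\})/(A+\{1\}) = \emptyset$ because $a+1 \nmid b+1$, making the whole product empty. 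The role of the ``$+\{1\}$'' shift is precisely to avoid division by zero in the borderline case $a = b = 0$, which would otherwise give $\{0\}/\{0\} = \emptyset$ instead of the desired $\{0\}$.

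Equipped with this identity, the reduction is immediate: introduce a single new input gate $u$ labeled $1$, copy every non-$\cap$ gate of the input circuit $C$ verbatim, and replace each $\cap$-gate $g$ with predecessors $g_1, g_2$ by a gadget of six new gates (the additions $g_1 + u$ and $g_2 + u$, the two divisions, their product, and a final multiplication by $g_1$), promoting the last gate to the new $g$. By structural induction using the identity above, the resulting $\{+, \times, /\}$-circuit $C'$ satisfies $I(C') = I(C)$, so $(C,b) \in \MC(\cap,+,\times,/)$ iff $(C',b) \in \MC(+,\times,/)$. The construction adds only $O(|C|)$ gates and is evidently computable in logarithmic space. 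The only technical subtlety I anticipate is the ``$+\{1\}$'' shift needed to correctly capture $\{0\} \cap \{0\} = \{0\}$; everything else is routine.
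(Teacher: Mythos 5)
Your proof is correct and follows essentially the same route as the paper: both exploit that every gate of a $\{\cap,+,\times,/\}$-circuit computes a set of cardinality at most one and replace each $\cap$-gate by a constant-size gadget that shifts both inputs by $1$ (to avoid the $\{0\}/\{0\}=\emptyset$ pitfall) and uses division as a divisibility test before multiplying back by the first input. The only cosmetic difference is the equality test itself --- you take the product $\bigl((A+\{1\})/(B+\{1\})\bigr)\times\bigl((B+\{1\})/(A+\{1\})\bigr)$, while the paper computes $\{1\}/\bigl((A+\{1\})/(B+\{1\})\bigr)$ --- and both variants are valid.
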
 
\begin{proof}
We show $\MC(\cap, \times, +, / ) \logReduction \MC(\times, +, / ) $. Note that the sets described by the gates of $\{ \cap, \times, +, / \}$-circuits are at most of cardinality $1$. Let $C$ be such a $\{ \cap, \times, +, / \}$-circuit. We construct a $\{ \times, +, / \}$-circuit $C'$, add one $1$-gate $c_1$ and rebuild each $\cap$-gate $g$ from $C$ as follows: 
We insert new $+$-gates $a_1$ and $a_2$ and new  $/$-gates $d_1$ and $d_2$. We turn $g$ into a $\times$-gate and connect the gates as shown in \Cref{fig:Lemma10}. (Note, that the picture shows multiple $1$-gates to avoid overlapping edges, but we can reuse $c_1$ every time we need a $1$-gate.) All other gates, including their predecessors, are transferred unchanged from $C$ to $C'$. 

We show that $I_C(g) = I_{C'}(g)$ inductively for all gates $g$ from $C$: \\ 
\underline{Base case:} Consider an input gate $g$ of $C$. Then $I_C(g) = \{ \alpha_C(g) \} = \{ \alpha_{C'}(g) \} = I_{C'}(g)$. \\
\underline{Induction step:} We only need to considers $\cap$-gate since all other gates remain unchanged. Let $g$ be an $\cap$-gate from $C$. We assume $I_C(p_1(g)) = I_{C'}(p_1(g))$ and $I_C(p_2(g)) = I_{C'}(p_2(g))$ (induction hypothesis). If $I_C(p_1(g)) = \emptyset$ or $I_C(p_2(g)) = \emptyset$, then $I_{C'}(d_1) = I_{C'}(d_2) = I_{C'}(g) = \emptyset = I_{C}(g)$. Otherwise there are $x_1, x_2 \in \natNum$ with $I_C(p_1(g)) = \{ x_1 \}$ and $I_C(p_2(g)) = \{ x_2 \}$. In this case the following applies:
\begin{align*}
I_{C'}(d_1) & = \left\{
\begin{array}{ll}
\{ \frac{x_1 +1}{x_2 +1} \} & \textrm{if $(x_2 +1)$ divides $(x_1 +1)$} \\
\emptyset & \textrm{otherwise} \\
\end{array}
\right. \\ 
I_{C'}(d_2) & = \left\{
\begin{array}{ll}
\{ 1 \} & \textrm{if } x_1 = x_2 \\
\emptyset & \textrm{otherwise} \\
\end{array}
\right. \\ 
I_{C'}(g) & = \left\{
\begin{array}{ll}
\{ x_1 \} & \textrm{if } x_1 = x_2 \\
\emptyset & \textrm{otherwise} \\
\end{array}
\right\} = I_{C}(g)
\end{align*}
This proves $I_C(g) = I_{C'}(g)$ for all gates $g$ from $C$. In particular this shows $I(C) = I(C')$. The statement follows, because $C'$ can be constructed from $C$ using logarithmic space. 
\qed
\end{proof} 

\begin{figure}[tb]
\begin{center}
\begin{tikzpicture}
  \newcommand\dx{2}
  \newcommand\dy{1.3}
  \newcommand\textoffset{0.5}
  \newcommand\circlesize{1cm}
  
  \node[draw,circle,fill=white,radius=0.3, minimum size=\circlesize] (c1) at (1*\dx, 4.5*\dy) {$1$};
  \node[draw,circle,fill=white,radius=0.3, minimum size=\circlesize] (c2) at (3.5*\dx, 4.5*\dy) {$1$};
  
  \node[draw,circle,fill=white,radius=0.3, minimum size=\circlesize] (c3) at (0.75*\dx, 2.25*\dy) {$1$};
  
  \node[draw,circle,fill=white,radius=0.3, minimum size=\circlesize, label=left:$p_1(g):$] (g1) at (0*\dx, 4.5*\dy) {$\scriptstyle \phantom{g_1}$};
  \node[draw,circle,fill=white,radius=0.3, minimum size=\circlesize, label=left:$p_2(g):$] (g2) at (2.5*\dx, 4.5*\dy) {$\scriptstyle \phantom{g_2}$};
  
  \node[draw,circle,fill=white,radius=0.3, minimum size=\circlesize, label=left:$a_1:$] (a1) at (0.5*\dx, 3.5*\dy) {$\scriptstyle +$};
  \node[draw,circle,fill=white,radius=0.3, minimum size=\circlesize, label=left:$a_2:$] (a2) at (3*\dx, 3.5*\dy) {$\scriptstyle +$};
  
  \node[draw,circle,fill=white,radius=0.3, minimum size=\circlesize, label=left:$d_1:$] (d1) at (1.75*\dx, 2.25*\dy) {$\scriptstyle /$};  
  \node[draw,circle,fill=white,radius=0.3, minimum size=\circlesize, label=left:$d_2:$] (d2) at (1.25*\dx, 1.25*\dy) {$\scriptstyle /$};
  
  \node[draw,circle,fill=white,radius=0.3, minimum size=\circlesize, label=left:$g:$] (g) at (0.75*\dx, 0*\dy) {$\scriptstyle \times$};
  
  \draw[->] (g1) to (a1);
  \draw[->] (c1) to (a1);
  
  \draw[->] (g2) to (a2);
  \draw[->] (c2) to (a2);
  
  \draw[->] (a1) to (d1);
  \draw[->] (a2) to (d1);
  
  \draw[->] (d1) to (d2);
  \draw[->] (c3) to (d2);
  
  \draw[->] (g1)[out=240,in=135] to (g);
  \draw[->] (d2) to (g);
\end{tikzpicture}
\end{center} 
\caption{Sketch of the gates to rebuild one $\cap$-gate $g$ from $C$ in $C'$ in the proof of \Cref{Lemma10}}
\label{fig:Lemma10}
\end{figure}
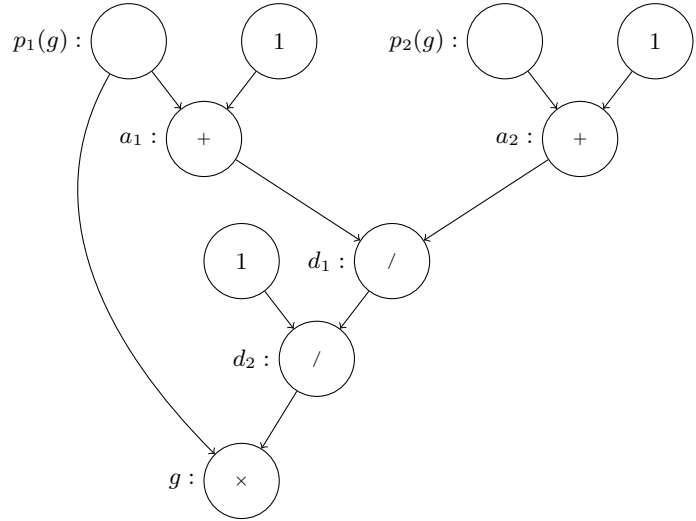

\begin{corollary}\label{Folg7}
$\MC(+, \times, /)$ is $\PIT$-hard. 
\end{corollary}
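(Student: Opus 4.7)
The plan is to derive \Cref{Folg7} immediately by chaining the two preceding results, without any new construction. By \Cref{Folg3}, $\MC(\cap, +, \times, /)$ is $\PIT$-hard, that is, every language $L \in \PIT$ satisfies $L \logReduction \MC(\cap, +, \times, /)$. By \Cref{Lemma10}, there is a log-space many-one reduction $\MC(\cap, +, \times, /) \logReduction \MC(+, \times, /)$. Composing these two reductions yields $L \logReduction \MC(+, \times, /)$ for every $L \in \PIT$, which is exactly the statement that $\MC(+, \times, /)$ is $\PIT$-hard.

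The only formal ingredients I would cite are the transitivity of $\logReduction$ and the fact that $\mathrm{FL}$ is stable under composition, both recorded in \Cref{Grundlagen}; these guarantee that the concatenation of the two log-space reduction functions is itself computable in logarithmic space. I would not re-prove that hardness is preserved upward along the reducibility used to define hardness, as this is a standard textbook fact.

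There is essentially no obstacle to overcome here: the genuinely nontrivial work has already been done inside \Cref{Lemma10}, where each $\cap$-gate of a $\{\cap, +, \times, /\}$-circuit is simulated by a gadget built only from $+$, $\times$, and $/$ exploiting that, for singleton sets $\{x_1\}$ and $\{x_2\}$, division realises an equality test via the identity that $(x_1+1)/(x_2+1)$ is a natural number equal to $1$ precisely when $x_1=x_2$. Consequently my ``proof'' of \Cref{Folg7} would consist of a single sentence invoking \Cref{Folg3}, \Cref{Lemma10}, and transitivity of $\logReduction$.
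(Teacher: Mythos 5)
Your proposal is correct and is exactly the paper's intended derivation: \Cref{Folg7} follows by composing the $\PIT$-hardness of $\MC(\cap,+,\times,/)$ from \Cref{Folg3} with the reduction $\MC(\cap,+,\times,/) \logReduction \MC(+,\times,/)$ of \Cref{Lemma10}, using transitivity of $\logReduction$. No further argument is needed.
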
 

\subsection{Circuits with union and division} We will show that $\MC( \cup, /)$ is $\NP$-hard by reducing the \textsc{ExactCover}, which is one of Karp's $\NP$-complete problems \cite{Kar72}, to it. 
\begin{definition}
Let $X$ be a set and $S$ be a set of non-empty subsets of $X$. An \textbf{exact cover} of $X$ is a subset $U \subseteq S$ such that the elements of $U$ are pairwise disjoint and $\bigcup_{A \in U} A = X$. This means that in an exact cover, every element of $X$ is contained in exactly one element of $U$. \ExactCover is the decision problem as to whether such an exact cover exists given $S$ and $X$, that is 
$ \ExactCover \defEquals \{ (S, X) \mid \exists \text{ $U \subseteq S$: $U$ is an exact cover of $X$} \}$.
\end{definition}  

\begin{theorem} \label{Satz5}
$\MC( \cup, /)$ is $\NP$-hard. 
\end{theorem}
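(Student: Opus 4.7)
The plan is to give a logspace many-one reduction from $\ExactCover$ to $\MC(\cup, /)$. Given an instance $(S, X)$ with $X = \{x_1, \ldots, x_n\}$ and $S = \{A_1, \ldots, A_m\}$, I would assign to each element $x_i$ a distinct prime $p_i$ (the first $n$ primes, which have $O(\log n)$ bits and can be enumerated in $\mathrm{FL}$ by trial division), and set $P_j \defEquals \prod_{x_i \in A_j} p_i$ for each $j \in [m]$, together with $b \defEquals \prod_{i=1}^n p_i$. Since distinct primes are coprime, $P_{j_1} \cdots P_{j_k} = b$ holds \ifAndOnlyIf the sets $A_{j_1}, \ldots, A_{j_k}$ are pairwise disjoint and $\bigcup_l A_{j_l} = X$, i.e.\ they form an exact cover of $X$.

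The circuit $C$ is built as follows. I would introduce input gates labeled $P_1, \ldots, P_m$ and combine them with $m-1$ union gates into a single gate $U$ representing $\{P_1, \ldots, P_m\}$. Separately, take an input gate $D_0$ labeled $b$, and chain $m$ division gates $D_1, \ldots, D_m$, where $D_{i+1}$ has $D_i$ as its first and $U$ as its second predecessor. The output gate $V$ is the union $D_0 \cup D_1 \cup \cdots \cup D_m$. The reduction outputs $(C, 1)$, asking whether $1 \in I(C)$. Since $C$ has $O(m)$ gates and its input labels have $O(n \log n)$ bits, $C$ can be produced in logspace.

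Correctness reduces to the structural claim, proven by induction on $i$, that
\[ I(D_i) = \{ b/(P_{j_1} \cdots P_{j_i}) : A_{j_1}, \ldots, A_{j_i} \in S \text{ pairwise disjoint} \}. \]
The induction step uses the definition of $/$: a value $c$ lies in $I(D_i)/I(U)$ iff $c \cdot P_j \in I(D_i)$ for some $j$, and by the induction hypothesis together with coprimality of the primes, this happens exactly when $A_j$ is disjoint from all $A_{j_l}$ already chosen. Hence $1 \in I(V)$ iff $1 \in I(D_i)$ for some $i$, iff a selection of pairwise disjoint sets from $S$ has $P$-product equal to $b$, iff $(S, X) \in \ExactCover$.

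The main obstacle will be controlling the cost of the reduction: I will need to argue carefully that enumerating the first $n$ primes and computing the iterated products $P_j$ and $b$ all fit within logspace. Both rely on the fact that each prime has only $O(\log n)$ bits (so trial division uses only logspace) and that iterated multiplication of such small factors, producing numbers of polynomial bit length, is in $\mathrm{L}$. Once this is settled, the structural induction is routine and the $\NP$-hardness of $\MC(\cup, /)$ follows from the $\NP$-completeness of $\ExactCover$.
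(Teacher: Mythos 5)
Your proposal is correct and follows essentially the same approach as the paper: a logspace reduction from $\ExactCover$ that encodes elements as primes, sets as products of primes, and uses chained $/$-gates with $\cup$-gates to nondeterministically peel off pairwise disjoint sets from $f(X)$, testing whether $1$ is reachable. The only difference is cosmetic circuit wiring — you union all the $P_j$ into one gate and divide by it $m$ times, then union the intermediate results, whereas the paper processes the sets $A_1,\dots,A_k$ one at a time with a ``divide by $A_{i+1}$ or skip it'' union after each step — but the invariant, the coprimality argument, and the logspace analysis are the same.
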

\begin{proof}
We will reduce \textsc{ExactCover} to $\MC( \cup, /)$. Let $X = \{a_1, \dots , a_m \}$ be a finite set and $S = \{ A_1, \dots, A_k \} \subseteq \mathcal {P}(X)$ with $\emptyset \not \in S$. Let $p: X \rightarrow \primes$ be the mapping of $a_i$ to the $i$-th prime $p_i$. Because of the Prime Number Theorem \cite[Thm. 6]{Hardy79}, $p_i \in O(\frac{i}{\log_e(i)})$. Hence $p_i$ can be represented in $O(\log(i))$ bits. We can test, whether a positive, natural number $k$ is prime in linear space with respect to the length of the binary representation of $k$, by iterating through all potential factors $l \in [k-1] \setminus \{ 1 \}$ on a different tape of the Turing machine. For each such $l$ we perform the long division (on the binary representation of the numbers) of $k$ divided by $l$ on another tape. Hence the function $p$ can be calculated in \logSpace (with respect to the size of the encoding of $X$ and $S$). For each $A \in \mathcal{P}(X)$ we define $f(A) = \prod_{a \in A} p(a)$. The calculation of such products is possible in \logSpace \cite[Thm. 2.2]{IL95}. Hence $f \in \mathrm{FL}$ and injective. For all $A, B \in S$, $A \subseteq B$ \ifAndOnlyIf $f(A)$ divides $f(B)$ \ifAndOnlyIf $ {f(B)}/{f(A)} \in \natNum$. If $A \subseteq B $ then $ f(B)/f(A) = f(B \setminus A) $. 
We want to construct a $\{ \cup, / \}$-circuit $C$ such that $(S, X) \in \ExactCover$, \ifAndOnlyIf $1 \in I(C)$. For each $A_i \in S$ let $g_{A_i}$ be an input gate with $I(g_{A_i}) = \{ f(A_i) \}$. We define the gates $g_0, \dots g_k$ inductively: For $i=0$, let $g_0$ be an input gate with $I(g_0) = \{ f(X) \}$. We aim for $
I(g_i) = \left\lbrace  f \left( X \setminus \bigcup\nolimits_{j \in I} A_j \right) \mid I \subseteq [i] \text{ and the $A_j$ with $j \in I$ are pairwise disjoint} \right\rbrace 
$ 
for all $i \in [k]$. For $0 \leq i < k$, assume that $g_i$ was already constructed. Let $h_{i+1}$ be a new $/$-gate with first predecessor $g_i$ and second predecessor $g_{A_{i+1}}$. Let $g_{i+1}$ be a new $\cup$-gate with predecessors $g_i$ and $h_{i+1}$. (See \Cref{fig:Satz5}.) Then we get 
{\small \begin{align*}
I(h_{i+1}) & = \left\{ x / f(A_{i+1}) \suchthat x \in I(g_i) \text{ and $f(A_{i+1})$ divides $x$ } \right\} \\
& = \left\{ \tfrac{f \left( X \setminus \bigcup_{j \in I} A_j \right)}{f(A_{i+1})} \suchthat
\begin{array}{ll} I \subseteq [i] \text{and the $A_j$ with $j \in I$ are pairwise } \\
\text{disjoint and $f(A_{i+1})$ divides } f \left( X \setminus \bigcup_{j \in I} A_j \right) \\
\end{array}
\right\} \\
& = \left\{ f \left( X \setminus \bigcup\nolimits_{j \in I} A_j \right) \suchthat
\begin{array}{ll} \{ i +1 \} \subseteq I \subseteq [i +1] \text{ and the $A_j$ } \\
\text{with $j \in I$ are pairwise disjoint} \\
\end{array}
\right\}
\end{align*} }
and $I(g_{i+1})$ is as desired. Let $g_k$ be the output gate of $C$. The following applies:  
\begin{enumerate}
\item If the elements of $U \subseteq S$ are pairwise disjoint, then $f(X \setminus \bigcup_{A \in U} A) \in I(C)$.
\item For all $z \in I(C)$ there is a $U \subseteq S$ such that the elements of $U$ are pairwise disjoint and $z = f(X \setminus \bigcup_{A \in U} A) $.
\item $f(X \setminus \bigcup_{A \in U} A) = 1$ \ifAndOnlyIf $\bigcup_{A \in U} A = X$.
\end{enumerate}
Hence there is an exact cover $U \subseteq S$ of $X$ \ifAndOnlyIf $1 \in I(C)$. $C$ can be constructed using logarithmic space.
Hence $\ExactCover \logReduction \MC( \cup, /)$ and $\MC( \cup, /)$ is $\NP$-hard. \qed
\end{proof}

\begin{figure}
\begin{minipage}[c]{0.35\linewidth}
\begin{center}
\begin{tikzpicture}
	\newcommand\dx{1.2}
	\newcommand\dy{2.25}
	\newcommand\textoffset{0.7}
	\newcommand\circlesize{1cm}
	
	\node[draw,circle,fill=white,radius=0.5, minimum size=\circlesize, label=left:$g_{A_{i+1}}:$] (fi) at (2*\dx, -0*\dy) {\phantom{$/$}};
	\node (fi_label) at (fi) {\tiny\thinmuskip=-3mu\medmuskip=-2mu\thickmuskip=-1mu$f(\hspace{-1.5pt}A_{i+1}\hspace{-1.5pt})$};
	
	\node[draw,circle,fill=white,radius=0.5, minimum size=\circlesize, label=left:$g_i:$] (gi) at (0*\dx, 0*\dy) {\phantom{$/$}};
	
	\node[draw,circle,fill=white,radius=0.5, minimum size=\circlesize, label=left:$h_{i+1}:$] (fii) at (1*\dx, -1*\dy) {$/$};
	
	\node[draw,circle,fill=white,radius=0.5, minimum size=\circlesize, label=left:$g_{i+1}:$] (gii) at (1*\dx, -2*\dy) {$\cup$};
	
	\draw[->] (gi)[out=-120,in=150] to (gii);
	\draw[->] (gi) to (fii);
	\draw[->] (fi) to (fii);
	\draw[->] (fii) to (gii);
\end{tikzpicture}
\end{center}
\caption{Sketch for \Cref{Satz5}}
\label{fig:Satz5}
\end{minipage}
\hfill
\begin{minipage}[c]{0.6\linewidth}
\begin{center}
\begin{tikzpicture}
	\newcommand\dx{1.15}
	\newcommand\dy{1.5}
	\newcommand\textoffset{0.7}
	\newcommand\circlesize{1cm}
	
	\node[circle,fill=white,radius=0.3, minimum size=\circlesize] (c1) at (-1.7*\dx, -0*\dy) {$C^x:$};
	
	\node[draw,circle,fill=white,radius=0.3, minimum size=\circlesize, label=left:$g_1:$,label=right:$2^n$] (c1) at (0*\dx, -0*\dy) { };
	\node[draw,circle,fill=white,radius=0.3, minimum size=\circlesize, label=left:$g_2:$,label=right:$2^m$] (c2) at (2*\dx, -0*\dy) { };
	\node[draw,circle,fill=white,radius=0.3, minimum size=\circlesize, label=left:$f_1:$] (const) at (4*\dx, -0*\dy) {$2$};
	
	\node[draw,circle,fill=white,radius=0.3, minimum size=\circlesize, label=left:$f_2:$,label=right:$2^{m+1}$] (times) at (3*\dx, -1*\dy) {$\times$};
	
	\node[draw,circle,fill=white,radius=0.3, minimum size=\circlesize, label=left:$f_3:$,label=right:$2^n / 2^{m+1}$] (div) at (2*\dx, -2*\dy) {$/$};
	
	\node[draw,circle,fill=white,radius=0.3, minimum size=\circlesize, label=left:$f_4:$] (c) at (2*\dx, -3*\dy) {$/$};
	
	\draw[->] (c1) to (div);
	\draw[->] (c2) to (times);
	\draw[->] (const) to (times);
	\draw[->] (times) to (div);
	
	\draw[->] (div)[out=-45,in=45] to (c);
	\draw[->] (div)[out=-135,in=135] to (c);
\end{tikzpicture}
\end{center}
\caption{The circuit $C^x$ from the proof of Thm. \ref{Satz4}} \label{fig:Satz4}
\end{minipage}
\end{figure}

\subsection{Circuits with complement and division} We can show that $\MC(\compl, /)$ is $\mathrm{P}$-hard by reducing the Circuit Value Problem to it. 
\begin{definition} The \emph{Circuit Value Problem} is defined as
\begin{align*}
\mathrm{CVP} \defEquals \{ (C, x_1, \dots, x_n) \mid & \text{$C$ is a $\{\vee, \wedge, \neg \}$-circuit with $n$ input gates such that } \\
                                                  &  \text{the output is $1$, if the values $x_1, \dots, x_n \in \{ 0, 1 \}$ are } \\
                                                  &  \text{assigned to the input gates} \}. 
\end{align*}
\end{definition}
It is known that $\mathrm{CVP}$ is $\mathrm{P}$-complete \cite[Lem. 4.85]{Vollmer99}.

\begin{theorem} \label{Satz8} $\MC(\compl, /)$ is $\mathrm{P}$-hard. 
\end{theorem}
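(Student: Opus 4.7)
The plan is to reduce $\mathrm{CVP}$ to $\MC(\compl,/)$. The core idea is to fix an encoding of the boolean truth values as specific subsets of $\natNum$ and then to simulate each of the three connectives $\neg,\wedge,\vee$ by a constant-size $\{\compl,/\}$-subcircuit. A natural encoding is TRUE $=\natNum$ and FALSE $=\emptyset$, because both sets are preserved by $\compl$ in the way one would hope and both are easy to detect by membership of a fixed number (say $1$).

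Under this encoding the three connectives are cheap: $\compl$ computes negation because $\compl\natNum=\emptyset$ and $\compl\emptyset=\natNum$; division computes conjunction because $\natNum/\natNum=\natNum$ (for every $c$, take $b=1$ and $a=c$), whereas $A/B=\emptyset$ whenever $A=\emptyset$ or $B\subseteq\{0\}$; and disjunction is realized via De Morgan's law as $A\vee B=\compl(\compl A/\compl B)$, which can be verified by checking the four input cases. To bring the two constant sets into the circuit I would add one small gadget: $\{5\}/\{2\}=\emptyset$ (since $2$ does not divide $5$) supplies FALSE, and $\compl(\{5\}/\{2\})=\natNum$ supplies TRUE. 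This gadget uses only one $2$-input, one $5$-input, a $/$-gate and a $\compl$-gate, and the two output wires can be shared across the whole construction.

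Given an instance $(C,x_1,\dots,x_n)$ of $\mathrm{CVP}$, the reduction emits a $\{\compl,/\}$-circuit $C'$ by scanning the gates of $C$ and, for each $i$, wiring every use of input $i$ to the TRUE-output of the gadget if $x_i=1$ and to the FALSE-output otherwise, and replacing each $\neg$-, $\wedge$-, and $\vee$-gate by $\compl$, $/$, or the De Morgan expansion respectively. A straightforward induction on the topological order of $C$ shows that the result set at each produced gate is $\natNum$ if the corresponding boolean gate of $C$ evaluates to $1$ and $\emptyset$ otherwise. In particular, $(C,x_1,\dots,x_n)\in\mathrm{CVP}$ \ifAndOnlyIf $1\in I(C')$. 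The translation only needs to enumerate the gates of $C$ while emitting a constant number of gates each, so it is in $\mathrm{FL}$, giving $\mathrm{CVP}\logReduction \MC(\compl,/)$ and hence $\mathrm{P}$-hardness.

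The main obstacle is simply to find an encoding that (i) is closed under $\compl$ — which forbids finite-or-cofinite truth values such as singletons — and (ii) makes division behave like conjunction on both truth values, which forbids truth values that fail $A/A=A$. The choice TRUE $=\natNum$, FALSE $=\emptyset$ meets both constraints, and the constant gadget $\{5\}/\{2\}$ is the one small trick needed to import these two infinite/empty sets into a circuit whose input gates can only be singletons.
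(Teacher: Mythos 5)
Your proposal is correct and matches the paper's proof in all essentials: the same reduction from $\mathrm{CVP}$, the same encoding of TRUE as $\natNum$ and FALSE as $\emptyset$, simulation of $\wedge$ by $/$ and $\neg$ by $\compl$, and a constant division gadget to produce the two truth-value sets (the paper uses $\{0\}/\{0\}=\emptyset$ where you use $\{5\}/\{2\}$, and it normalizes the $\mathrm{CVP}$ instance to $\{\wedge,\neg\}$-gates via De Morgan rather than expanding $\vee$-gates inside the circuit, but these are cosmetic differences).
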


\begin{proof}
Because of de-Morgan's rule, it is sufficient to show that $\mathrm{CVP}$ can be reduced to $\MC(\compl, /)$ for $\{ \wedge, \neg \}$-circuits. Let $C$ be such a circuit with an output gate $g_C$, input gates $g_1, \dots, g_n$ and inputs $x_1, \dots, x_n \in \{ 0, 1 \}$. We construct a $\{ \compl, / \}$-circuit $C'$ from $C$. We want to represent the values $0$ and $1$ by the empty set and its complement (the set of natural numbers). Add three new gates $f, f_0$ and $f_1$ with $\alpha(f) = 0$, $\alpha(f_0) = / $ and $\alpha(f_1) = \compl$ to $C'$. Let $f$ be the first and second predecessors of $f_0$ and let $f_0$ be the predecessor of $f_1$ (see \Cref{fig:Satz8}). Hence $I(f_0) = \{ 0 \} / \{ 0 \} = \emptyset$ and $I(f_1) = \overline{ \emptyset } = \natNum$. We copy all gates except the input gates from $C$ to $C'$. We replace $\wedge$-gates with $/$-gates and $\neg$-gates with $\compl$-gates. (The order of the predecessors of the $/$-gates may be chosen arbitrarily.) If a gate $g$ in $C$ had an input gate $g_i$ as a predecessor, we add an edge from $f_{x_i}$ to $g$. 
Note how the following equations for sets $\natNum$ and $\emptyset = \overline{\natNum}$ and the truth values $0$ and $1$ correspond: 
{ \small \begin{align*}
\emptyset / \emptyset = \emptyset / \natNum = \natNum / \emptyset & = \emptyset 
& \natNum / \natNum & = \natNum
& \overline{ \emptyset } & = \natNum
& \overline{ \natNum } & = \emptyset \\
0 \wedge 0 = 0 \wedge 1 = 1 \wedge 0 & = 0
& 1 \wedge 1 & = 1
& \neg 0 & = 1
& \neg 1 & = 0
\end{align*} }
We can show by induction that for all gates $g$ from $C'$ with $g \notin \{ f, f_0, f_1 \}$, 
\begin{enumerate}
\item[(i)] $I(g) \in \{ \emptyset, \natNum \}$ and
\item[(ii)] $I(g) = \natNum$ \ifAndOnlyIf the value $1$ is produced at the gate $g$ in $C$ when the inputs are assigned the values $x_1, \dots, x_n$.
\end{enumerate}
Therefore it holds that: 
{ \small $$
(C, x_1, \dots, x_n) \in \mathrm{CVP} 
\; \Leftrightarrow \; I(g_C) = \natNum 
\; \Leftrightarrow \; 1 \in I(g_C) 
\; \Leftrightarrow \; (C', 1) \in \MC(\compl, /)
$$ }
$C'$ can be constructed using logarithmic space. Hence $\mathrm{CVP} \logReduction \MC(\compl, /)$.
\qed
\end{proof}

\begin{figure}
\begin{center} 
\begin{tikzpicture}
	\newcommand\dx{3}
	\newcommand\dy{2.5}
	\newcommand\textoffset{0.7}
	
	\node[draw,circle,fill=white,radius=0.5, minimum size=1cm, label=left:$f:$, label=right:$\{0\}$] (f) at (0*\dx, -0*\dy) {$0$};
	
	\node[draw,circle,fill=white,radius=0.5, minimum size=1cm, label=left:$f_0:$, label=right:$\;\,\emptyset$] (f0) at (1*\dx, 0*\dy) {$/$};
	
	\node[draw,circle,fill=white,radius=0.5, minimum size=1cm, label=left:$f_1:$, label=right:$\;\,\natNum$] (f1) at (2*\dx, 0*\dy) {$\overline{\phantom{0}}$};
	
	\draw[->] (f)[out=30,in=150] to (f0);
	\draw[->] (f)[out=-30,in=-150] to (f0);
	\draw[->] (f0)[out=30,in=150] to (f1);
\end{tikzpicture}
\end{center} 
\caption{Sketch for the proof of \Cref{Satz8}}
\label{fig:Satz8}
\end{figure}
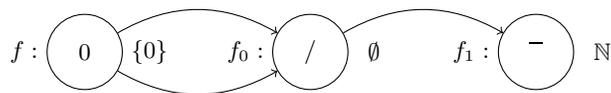

\subsection{Circuits with multiplication and division}

\begin{theorem} \label{Satz4} $\MC( \times, /)$ is $\PL$-hard. 
\end{theorem}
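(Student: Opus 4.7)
The plan is to exhibit a logspace many-one reduction from a $\PL$-complete problem to $\MC(\times,/)$, using the sub-circuit $C^x$ sketched in Figure~\ref{fig:Satz4} as a comparison primitive. That sub-circuit has input gates labelled $2^n$ and $2^m$, a constant $2$-gate, a $\times$-gate computing $\{2^{m+1}\}$, and two $/$-gates $f_3 = \{2^n\}/\{2^{m+1}\}$ and $f_4 = f_3/f_3$. Since $\{a\}/\{b\}=\{a/b\}$ when $b\mid a$ and is $\emptyset$ otherwise, we get $f_3=\{2^{n-m-1}\}$ iff $n>m$, and therefore $f_4=\{1\}$ iff $n>m$ (and $f_4=\emptyset$ otherwise). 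So $C^x$ realises the strict comparison of two exponents using only $\times$ and $/$.

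Building on this, I would start from a $\PL$-complete problem phrased as ``given $x$, is $N(x) > M(x)$?'', with $N(x), M(x)$ non-negative integers computable from $x$ in logarithmic space --- a $\mathrm{GapL}$-style comparison problem (such as the positivity of the signed number of paths in a layered DAG, or the sign of an integer determinant suitably re-encoded) is a natural candidate. For each instance $x$, I would construct in logspace a $\{\times,/\}$-circuit containing a copy of $C^x$ whose two inputs are the labels $2^{N(x)}$ and $2^{M(x)}$ (or are built from several input labels $2^{e_i(x)}$ via $\times$ and $/$ to accumulate the required exponents). The constructed circuit outputs $1$ iff $N(x)>M(x)$, i.e., iff $x$ belongs to the chosen $\PL$-complete language $L$; hence $(\text{circuit},1)\in\MC(\times,/)$ iff $x\in L$.

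The main obstacle will be matching the $\PL$-complete problem to this encoding. The binary label $2^k$ has length $k+1$, so a naive plug-in would force $N(x), M(x)$ themselves to be polynomial in $|x|$, which is usually not true for raw $\mathrm{GapL}$-counts. The way to bypass this is to exploit that $\times$ and $/$ on powers of $2$ realise addition and subtraction of exponents: if $N$ and $M$ can be written as integer linear combinations of polynomially many logspace-computable exponents $e_i(x)$, then the corresponding powers of two can be multiplied and divided together inside the circuit, accumulating the required exponents, before being fed into $C^x$. Identifying a $\PL$-complete problem that admits such a decomposition, and verifying logspace constructibility of the resulting circuit, is the heart of the argument; once that is in place, correctness of the whole reduction follows immediately from the analysis of $C^x$ above.
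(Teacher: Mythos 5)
Your analysis of the comparison gadget $C^x$ is correct and matches the paper: $f_4=\{1\}$ iff $n>m$, and otherwise $I(f_4)=\emptyset$ (or $\{0\},\{2^{n-m-1}\}/\{2^{n-m-1}\}$ issues aside, the paper uses exactly this). However, you have explicitly deferred what you yourself call ``the heart of the argument'' --- producing the labels $2^n$ and $2^m$ for an arbitrary $\PL$ language --- and the direction you gesture at (writing $N(x)$ and $M(x)$ as integer linear combinations of polynomially many logspace-computable exponents) is not how this can be made to work: the number of accepting paths of a polynomial-time probabilistic logspace machine is a $\#\mathrm{L}$-style count that is exponentially large and has no evident decomposition of that form. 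So as it stands the proposal is a correct gadget plus an unresolved reduction, i.e., a genuine gap.

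The paper closes this gap differently and more directly. Starting from an arbitrary $A\in\PL$ with probabilistic logspace machine $M$ of polynomial running time $p$, it first pads $M$ to a machine $M'$ whose every computation path on input $x$ has exactly $p(|x|)$ branching steps, so that $x\in A$ iff the number $n$ of accepting paths strictly exceeds the number $m$ of rejecting paths. It then builds, in logarithmic space, the configuration graph of $M'$ on $x$ and reads it as \emph{two} $\{\times\}$-circuits $C_1^x$ and $C_2^x$: every non-halting configuration becomes a $\times$-gate over its successors (with a padding $1$-gate for deterministic steps), accepting halting configurations are labelled $2$ in $C_1^x$ and $1$ in $C_2^x$, and rejecting ones the other way around. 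The value computed at the initial configuration is then the product of the leaf labels over all computation paths, i.e., $I(C_1^x)=\{2^n\}$ and $I(C_2^x)=\{2^m\}$, with only polynomially many gates even though $n$ and $m$ are exponential. These two circuits are plugged into your gadget as $g_1$ and $g_2$. If you want to complete your proof, this configuration-graph-as-multiplication-circuit construction is the missing ingredient; without it, the reduction does not go through.
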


\begin{proof}
Let $A \in \PL$. This means that a probabilistic \TM $M$ exits that recognizes $A$ in polynomial time (with runtime $p$) using at most logarithmic space \cite{Jung85}. Hence for all $x$ it holds that $ x \in A$ \ifAndOnlyIf $prob_{M}(x) > \frac{1}{2}$.
Let $M'$ be a probabilistic \TM that simulates $M$ while counting the branching steps on an additional tape. If $M$ with input $x$ stops in a state $q_S$ after less then $p(|x|)$ branching steps, $M'$ continues to make additional branching steps to ensure that the computation tree of $M'(x)$ has exactly $p(|x|)$ branching configurations on each path in the computation tree for the input $x$. $M'$ accepts $x$, if $q_S$ is an accepting state. Hence for each input $x$, $prob_{M'}(x) = prob_{M}(x)$ holds and each computation with the input $x$ ends after exactly $p(|x|)$ branching steps. Like $M$, $M'$ uses at most logarithmic space. 

We will construct a $\{\times, / \}$-circuit $C^x$ such that we can use the membership problem for $C^x$ to check if the input $x$ is element of $A$. Let $f_1, \dots, f_4$ and $g_1$ and $g_2$ be gates of $C^x$. Let $\alpha(f_1) = 2, \alpha(f_2) = \times$ and $\alpha(f_3) = \alpha(f_4) = /$. Let $g_2$ and $f_1$ be the predecessors of $f_2$, $g_1$ be the first and $f_2$ the second predecessors of $f_3$, and $f_3$ be both predecessors of $f_4$. Let $f_4$ be the output gate of $C^x$. A sketch of $C^x$ can be found in \Cref{fig:Satz4}. The configuration graph of $M'$ with input $x$ can be constructed using logarithmic space. While we construct the configuration graph we simultaneously translate it into two $\{\times\}$-circuits $C_1^x$ and $C_2^x$ such that $g_1$ is the output gate of $C_1^x$ and $g_2$ is the output gate of $C_2^x$. The gates $g_1$ and $g_2$ will represent the initial configuration of $M'$ with input $x$. 
As we construct the configuration graph we replace each configuration in which $M'$ does not stop with a $\times$-gate in $C_1^x$ and another $\times$-gate in $C_2^x$. For each configuration that is followed by a deterministic step we introduce a $1$-gate as one of its predecessors. Therefore each $\times$-gate in  $C_1^x$ and $C_2^x$ has two predecessors. We replace each configuration in which $M'$ stops by an input gate. In $C_1^x$ each accepting configuration is represented by a $2$-gate and each rejecting configuration is represented by a $1$-gate. In $C_2^x$ we label the input gates the other way around. 
Let $n$ be the number of accepting paths and $m$ be the number of rejecting paths of $M'$ with input $x$. Hence $I(C_1^x) = 2^n$ and $I(C_2^x) = 2^m$. Furthermore for each input $x$, $prob_{M'}(x) = \frac{n}{2^{p(|x|)}} $ and $M'$ rejects $x$ with the probability $\frac{m}{2^{p(|x|)}}$.
The following applies:
{ \small $$ x \in A \; \Leftrightarrow \; prob_{M'}(x) > \frac{1}{2} \; \Leftrightarrow \; n > m \;\Leftrightarrow \; 2^n \geq 2^{m +1} \; \Leftrightarrow \; 2^n / 2^{m +1} \in \natNum \; \Leftrightarrow \; 1 \in I(C^x)$$ }
Since $C^x$ can be constructed in \logSpace, we get $A \logReduction \MC( \times, /)$. \qed
\end{proof}

\subsection{Division only circuits}
We can show, that $\MC(/)$ is $\NL$-hard by reducing the Graph Accessibility Problem to the complement of $\MC(/)$. 
\begin{definition} The \emph{Graph Accessibility Problem} is defined as
\begin{align*}
\mathrm{GAP} = \{ (G, u, v) \mid & \text{$G$ is a directed graph that has a path form $u$ to $v$} \}.
\end{align*}
\end{definition}
It is known that $\mathrm{GAP}$ is $\NL$-complete \cite[Thm. 8.20]{Sipser97}. 

\begin{theorem} \label{Satz2} $\MC(/)$ is $\NL$-hard.
\end{theorem}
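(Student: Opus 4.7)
The plan is to reduce $\mathrm{GAP}$ to $\overline{\MC(/)}$; since $\NL = \mathrm{coNL}$ under $\logReduction$-reductions, $\NL$-hardness of $\overline{\MC(/)}$ immediately transfers to $\MC(/)$. Given an instance $(G, u, v)$ of $\mathrm{GAP}$ with $n \defEquals |V(G)|$, I will construct in logarithmic space a $\{/\}$-circuit $C$ such that $1 \in I(C)$ \ifAndOnlyIf there is no directed path from $u$ to $v$ in $G$.

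The key observation is that every gate in a $\{/\}$-circuit evaluates to either $\emptyset$ or a singleton, and on the two-element ``alphabet'' $\{\emptyset, \{1\}\}$ the operation $/$ realises logical AND: $\{1\}/\{1\} = \{1\}$ while every other combination yields $\emptyset$. The empty set can be produced inside the circuit as $e \defEquals c_1 / c_0$, where $c_0$ and $c_1$ are input gates labelled $0$ and $1$; indeed $\{1\}/\{0\} = \emptyset$ because $\{0\} \setminus \{0\} = \emptyset$ leaves no admissible denominator.

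To obtain a reduction that also handles cyclic $G$, I would unroll reachability in $n$ layers in the standard fashion. For each $w \in V(G)$ and each $t \in \{0, 1, \dots, n\}$, introduce a gate $g_{w,t}$ whose intended Boolean value is ``$w$ is \emph{not} reachable from $u$ using at most $t$ edges''. Set $g_{u,t} \defEquals e$ for every $t$, set $g_{w,0} \defEquals c_1$ for every $w \neq u$, and for $t \geq 1$ and $w \neq u$ define $g_{w,t}$ as the left-associated chain $g_{w,t-1} / g_{w_1,t-1} / g_{w_2,t-1} / \dots / g_{w_k,t-1}$, where $w_1, \dots, w_k$ enumerate the in-neighbours of $w$ in $G$ (and $g_{w,t} \defEquals g_{w,t-1}$ if $w$ has none). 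Take $g_{v,n}$ as the output gate and query the number $1$.

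A routine induction on $t$, using the AND-behaviour of $/$ on $\{\emptyset, \{1\}\}$ together with the recurrence ``$w$ is unreachable within $t{+}1$ steps \ifAndOnlyIf $w$ is unreachable within $t$ steps and every in-neighbour of $w$ is unreachable within $t$ steps'', establishes that $g_{w,t}$ really computes the intended value. Since any reachable vertex is reached within at most $n-1$ edges, $1 \in I(g_{v,n})$ holds \ifAndOnlyIf $(G,u,v) \notin \mathrm{GAP}$. The whole construction is easily carried out in $\mathrm{FL}$: the indices $(w,t)$ and the enumeration of in-neighbours each fit in $O(\log n)$ bits. The only delicate point is the time-stratification into layers, needed to break cycles in $G$ while keeping the circuit acyclic; the arithmetic itself---division simulating AND, and division by zero producing the empty set---is elementary.
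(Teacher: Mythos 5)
Your proposal is correct and follows essentially the same route as the paper: reduce $\mathrm{GAP}$ to $\overline{\MC(/)}$, exploit $\NL = \mathrm{coNL}$, and use the fact that $/$ acts as a logical AND on a two-valued domain of result sets, with division by $\{0\}$ (or by $\emptyset$) producing the empty set. The only differences are cosmetic: the paper starts from an acyclic $\mathrm{GAP}$ instance and pads indegrees to $0$ or $2$, tracking the three values $\{0\},\{1\},\emptyset$, whereas you break cycles by layering and normalise everything to $\{\emptyset,\{1\}\}$ via the gadget $\{1\}/\{0\}=\emptyset$.
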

\begin{proof} 
Let $G = (V, E)$ be a directed acyclic graph with $s,t \in V$. We can modify $G$ in \logSpace such that $s$ is a source, $t$ is a sink, and every node $v \in V$ has indegree 0 or 2. We also assume $s \neq t$. Note that vertices with indegree 0 (except $s$) are never part of any $s$-$t$-path (that is, a path from $s$ to $t$). Still, we keep these vertices to ensure that each other gate has indegree 2.
For every node $v$ with $v \neq s$ and indegree 0 let $\alpha(v) = 1$ and for all nodes $w$ with indegree 2 let $\alpha(w) = /$. Also, let $\alpha(s) = 0$.
Then $C = (G, E, t, \alpha)$ is a $\{ / \}$-circuit. The order of the predecessors of the nodes with indegree 2 can be chosen arbitrarily. Inductively one can show for all $v \in V$, $I(v) \in \{ \{0 \}, \{1 \}, \emptyset \}$ and that there is $s$-$v$-path in $G$, \ifAndOnlyIf $I(v) \in \{ \emptyset, \{ 0 \} \} $. This shows that $(G, s, t) \in \mathrm{GAP}$ \ifAndOnlyIf there is an $s$-$t$-path in $G$ \ifAndOnlyIf $I(t) \in \{ \emptyset, \{ 0 \} \}$ \ifAndOnlyIf $(C,1) \in \overline{\MC(/)}$. The problem $\MC(/)$ is $\NL$-hard, because $\mathrm{GAP}$ is $\NL$-complete and $\NL = \mathrm{coNL}$. \qed
\end{proof}

\section{Conclusion and open questions} \label{Kapitel_Overview}
There are both cases in which $\MC( \cO) \logEquiv \MC( \cO \cup \{ / \}) $ and cases in which--if common assumptions hold--this is not the case. In this sense, the following lower bounds are interesting compared to the results by McKenzie and Wagner~\cite{MW07}:
\begin{itemize}
\item $\MC(\cup)$ is $\NL$-complete. $\MC(\cup, /)$ is $\NP$-hard.
\item $\MC(\cup,\cap)$ and $\MC(\cup,\cap,\compl)$ are $\mathrm{P}$-complete. $\MC(\cup,\cap, /)$ and $\MC(\cup,\cap,\compl,/)$ are $\NP$-hard.
\item $\MC(\times)$ is $\NL$-complete. $\MC(\times,/)$ is $\PL$-hard. 
\item $\MC(+,\times)$ is $\mathrm{P}$-hard. $\MC(+,\times, /)$ is $\PIT$-hard. 
\end{itemize}
The question of whether $\MC(\cup, \cap, \compl, +, \times, /)$ and $\MC(\compl, +, \times, /)$ are decidable remains open. 
Several bounds are not tight and closing these gaps remains open. Membership problems for formulas with division have not been studied so far. It would be interesting to also consider subtraction and other forms of division and compare them to the division without remainder and without rounding. I am also looking at emptiness and equivalence problems with division.

\begin{credits}
\subsubsection{\ackname} I'm grateful for the encouraging discussions with my former supervisor Prof. Dr. Christian Glaßer. I would like to thank my friend Nils Morawietz for handy tips on writing a paper and my partner Andreas Sacher for the \LaTeX\ support.

\end{credits}

\newpage
\bibliographystyle{splncs04}
\bibliography{main}

\end{document}